\begin{document}
\title{Realizing Graphs with Cut Constraints}
%

\author{
Vítor Gomes Chagas\inst{1}\orcidID{0000-0002-6506-4174}
\and \\
Samuel Plaça de Paula\inst{1}\orcidID{0009-0005-5970-2984}
\and \\
Greis Yvet Oropeza Quesquén\inst{1}\orcidID{0000-0003-0112-8009}
\and \\
Lucas de Oliveira Silva\inst{1}\orcidID{0000-0002-7846-5903}
\and \\
Uéverton dos Santos Souza\inst{2,3}\orcidID{0000-0002-5320-9209}
}

\authorrunning{V. G. Chagas et al.}
%

\institute{
Instituto de Computação, Universidade Estadual de Campinas, Campinas, Brazil\\
\email{vitor.chagas@ic.unicamp.br} \\
\email{s233554@dac.unicamp.br} \\
\email{greis.quesquen@ic.unicamp.br} \\
\email{lucas.oliveira.silva@ic.unicamp.br} \\
\and
IMPA, Instituto de Matemática Pura e Aplicada, Rio de Janeiro, Brazil\\
\and
Instituto de Computação, Universidade Federal Fluminense, Niterói, Brazil\\
\email{ueverton.souza@impatech.org.br}}

\maketitle 
\begin{abstract}

Given a finite non-decreasing sequence $\texttt{d}=(d_1,\ldots,d_n)$ of natural numbers, 
the \GRfull~problem asks whether \texttt{d} is a graphic sequence, i.e., there exists a labeled simple graph such that $(d_1,\ldots,d_n)$ is the degree sequence of this graph. Such a problem can be solved in polynomial time due to the Erd\H{o}s and Gallai characterization of graphic sequences. 
Since vertex degree is the size of a trivial edge cut, we consider a natural generalization of \GRfull, where we are given a finite sequence $\texttt{d}=(d_1,\ldots,d_n)$ of natural numbers (representing the trivial edge cut sizes) and a list of nontrivial cut constraints $\call$ composed of pairs $(S_j,\ell_j)$ where $S_j\subset \{v_1,\ldots,v_n\}$, and $\ell_j$ is a natural number.
In such a problem, we are asked whether there is a simple graph with vertex set $V=\{v_1,\ldots,v_n\}$ such that $v_i$ has degree $d_i$ and $\partial(S_j)$ is an edge cut of size $\ell_j$, for each $(S_j,\ell_j)\in \call$. We show that such a problem is polynomial-time solvable whenever each $S_j$ has size at most three. Conversely, assuming P~$\neq$~NP, we prove that it cannot be solved in polynomial time when $\call$ contains pairs with sets of size four, and our hardness result holds even assuming that each $d_i$ of \texttt{d} equals $1$. 

\keywords{Graph realization \and Degree sequence \and Graph factor}
\end{abstract}
%
%
%
\newpage

\section{Introduction}
\label{sec:introduction}

Graph realization is a fundamental combinatorial problem in the field of Graph Theory, and its studies have fostered interest in and understanding of the discrete structure of graphs. Nowadays, graph realization is a topic commonly covered in introductory Graph Theory courses, providing those new to the world of graphs with many insights into their combinatorial properties.
The Handshaking Lemma, for example, is usually one of the first statements that beginners come across when they begin studying graphs. Although simple to understand, it is the gateway to a world of more intriguing graph-related questions. By observing that the sum of the degrees of all vertices is equal to twice the number of edges in the graph, it follows that not every sequence of $n$ natural numbers can be the degree sequence of some graph with $n$ vertices, and it becomes natural to ask when a sequence of $n$ natural numbers admits a graph with $n$ vertices whose degree sequence corresponds to the given sequence; the topic related to such a question is called \emph{graph realization}.

Given a sequence $\texttt{d}$ of $n$ natural numbers that satisfy the Handshaking Lemma (i.e., the sum of its values is even), it is a simple exercise to verify that it is possible to construct a multigraph (parallel edges and loops are allowed) whose degree sequence corresponds to $\texttt{d}$. However, this question becomes more intriguing when the goal is to realize a simple graph where parallel edges and loops are not allowed. A non-decreasing sequence $\texttt{d}=(d_1,\ldots,d_n)$ of natural numbers is said to be \emph{graphic} if it is realizable by a simple graph, that is, if there exists a labeled simple graph $G$ with $n$ vertices such that \texttt{d} is its degree sequence.
Formally, the classical {\sc Graph Realization} problem is stated as follows:

\defproblema{Graph Realization}
{A non-decreasing sequence $\texttt{d} = (d_1, \dots, d_n)$ of natural numbers.}
{Is $\texttt{d}$ a graphic sequence?}

In 1960, Erd\H{o}s and Gallai provided necessary and sufficient conditions for a sequence of non-negative integers to be graphic, proving the following theorem.

\begin{theorem}[Erdős and Gallai~\cite{erdos60}]\label{erdosgallaithm}
    A non-decreasing sequence $\texttt{d} = (d_1, \dots, d_n)$ of natural numbers is graphic if and only if \\
    \begin{enumerate*}
        \item $\sum\limits_{i=1}^n d_i$ is even, and %
        \item $\sum\limits_{i=1}^k d_i \le k(k - 1) + \sum\limits_{i=k+1}^n \min\{d_i, k\}$, for every $1 \le k \le n$.
    \end{enumerate*}
\end{theorem}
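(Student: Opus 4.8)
The plan is to prove the two implications separately. Throughout I work with the degrees listed in non-increasing order, so that the inequality in the second condition bounds the degree sum of the $k$ vertices of largest degree; this is the regime in which the constraints actually bind.

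For the (easy) necessity direction, suppose a simple graph $G$ realizes $\texttt{d}$. The first condition is immediate from the Handshaking Lemma, since $\sum_{i=1}^n d_i = 2|E(G)|$ is even. For the second, fix $k$ and let $A$ be the set of $k$ vertices carrying the $k$ largest degrees, so that $\sum_{v\in A}\deg(v) = \sum_{i=1}^k d_i$. I would bound this sum by splitting the edges incident to $A$: edges with both ends in $A$ number at most $\binom{k}{2}$ and contribute $2$ each, accounting for at most $k(k-1)$; every remaining edge incident to $A$ runs to some $u\in V\setminus A$, and each such $u$ sends at most $\min\{\deg(u),k\}$ edges into $A$. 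Summing over $u\in V\setminus A$, whose degrees are $d_{k+1},\dots,d_n$, yields the term $\sum_{i=k+1}^n\min\{d_i,k\}$, and the second condition follows.

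The substantial direction is sufficiency, which I expect to be the main obstacle. I would argue by strong induction on $\sum_i d_i$. If the sum is $0$, the edgeless graph works. Otherwise I would perform a single Havel--Hakimi reduction: remove the vertex of degree $d_1$ and subtract $1$ from each of the next $d_1$ largest degrees, then re-sort to obtain a sequence $\texttt{d}'$ of smaller total. By the induction hypothesis $\texttt{d}'$ is realizable, and re-attaching the removed vertex to exactly those $d_1$ vertices produces a realization of $\texttt{d}$. The crux --- and the step I expect to consume the real effort --- is verifying that $\texttt{d}'$ still satisfies the Erd\H{o}s--Gallai inequalities. This reduces to comparing, for every cutoff, the partial sums of $\texttt{d}'$ with those of $\texttt{d}$, through a case analysis that tracks whether the decremented positions fall above or below the cutoff; the parity condition is automatically preserved because the total drops by $2d_1$.

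As a fallback I would keep two alternatives in reserve. First, a feasibility/flow argument: a realization is a symmetric $0/1$ matrix with zero diagonal and prescribed row sums, i.e.\ a degree-constrained subgraph of $K_n$, whose existence can be settled by a Gale--Ryser-style max-flow min-cut computation in which the Erd\H{o}s--Gallai inequalities reappear precisely as the min-cut conditions. Second, a $2$-switch argument: start from a near-realization and repeatedly swap pairs of edges to eliminate defects, using the second condition to guarantee that an improving switch always exists. In every route the genuine content is the same --- converting the global inequalities of the second condition into a local guarantee that the construction can always take its next step --- and that is the heart of the theorem.
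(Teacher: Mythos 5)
The paper itself contains no proof of this theorem: it is stated as a classical result with a citation to Erd\H{o}s and Gallai, and the text immediately after lists several published proofs (Harary, Berge, Choudum, Aigner--Triesch, Tripathi et al.). So your attempt can only be compared against the standard literature, and there it holds up as a viable outline. Your necessity argument is complete and correct: it is exactly the standard double count of edges incident to the $k$ vertices of largest degree. Your sufficiency route --- induction on $\sum_i d_i$ via a single Havel--Hakimi reduction, with the crux being that the reduced sequence $\texttt{d}'$ again satisfies the Erd\H{o}s--Gallai inequalities --- is a legitimate and known strategy, closest in spirit to the induction proofs of Choudum~\cite{choudum1986simple} and Tripathi--Venugopalan--West~\cite{tripathi2010short} (which, note, decrement only two entries per step rather than performing a full Havel--Hakimi reduction, precisely because that makes the preservation check lighter). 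The lemma you flag as the crux is true, but be aware that in your proposal it is named rather than proved, and it is where essentially all the content of the theorem sits; one small check you omit is that $\texttt{d}'$ is even well defined, i.e.\ that the $d_1$ decremented entries are all positive --- this follows from the $k=1$ inequality, $d_1 \le \sum_{i\ge 2}\min\{d_i,1\}$, and deserves a sentence. Your two fallbacks are also sound and correspond to existing proofs: the flow/degree-constrained-subgraph formulation is essentially Berge's $f$-factor route, and the defect-repairing $2$-switch argument is the original Erd\H{o}s--Gallai approach.

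One further point in your favor: you silently corrected the statement. As printed in the paper, the sequence is called \emph{non-decreasing}, yet the inequality bounds $\sum_{i=1}^k d_i$; with that ordering the left side collects the $k$ \emph{smallest} degrees and the condition becomes far too weak (e.g.\ $(0,0,0,2,2)$ passes all the inequalities but is not graphic). The theorem is correct only when the inequality binds the $k$ largest degrees, i.e.\ for a non-increasing ordering, which is exactly the convention you adopt at the outset.
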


It is not difficult to see that the conditions presented by Erd\H{o}s and Gallai are necessary. However, the sufficiency proof provided by Erd\H{o}s and Gallai is quite elaborated. Several alternative proofs for this sufficiency condition have been shown since then until recently, such as Harary~\cite{harary1969graph} in 1969, Berge~\cite{berge1973graphs} in 1973,  Choudum~\cite{choudum1986simple} in 1986, Aigner and Triesch~\cite{aigner1994realizability} in 1994, Tripathi and Tyagi~\cite{tripathi2008simple} in 2008, and Tripathi, Venugopalan, and West~\cite{tripathi2010short} in 2010. 
From Theorem~\ref{erdosgallaithm}, it follows that \GRfull~ can solved in polynomial time. Additionally, Tripathi, Venugopalan, and West~\cite{tripathi2010short} presented a simple constructive proof of Theorem~\ref{erdosgallaithm} that allows us to obtain the graph to be realized in $\mathcal{O}(n\cdot \sum_{i=1}^n d_i)$ time.
Furthermore, algorithms like the one provided by Havel and Hakimi~\cite{Ha55,Ha62}, which iteratively reduce the degree sequence while maintaining its realizability, also give a constructive approach to finding such graphs if one exists. Havel and Hakimi's algorithms run in
$\mathcal{O}(\sum_{i=1}^n d_i)$ time, which is optimal.

Variants of the \GRfull{} problem requiring that the realizing graph belongs to a particular graph class have also been studied in the literature. Examples of already studied classes included trees~\cite{gupta2007graphic}, 
Split graphs~\mbox{\cite{hammer1981splittance,chat2014recognition}}, Chordal, interval, and perfect graphs~\cite{chernyak1987forcibly}. Besides that, sequence pairs representing the degree sequences of a bipartition in a realizing bipartite graph were also studied in~\cite{burstein2017sufficient}. Surprisingly, the question regarding \GRfull{} for the class of bipartite graphs appears to remain open for over 40 years~\mbox{\cite{bar2022realizing,rao2006survey}}. 
In addition, the problem of determining whether a given sequence defines a unique realizing simple graph was studied in~\cite{aigner1994realizability,KOREN1976235,pak2013constructing}, and Bar-Noy, Peleg, and Rawitz~\cite{bar2020vertex} introduced the vertex-weighted variant of \GRfull{} where we are given a sequence $\texttt{d}=(d_1,d_2,\ldots,d_n)$ representing a ``weighted degree'' sequence, and a vector $\texttt{w}=(w_1,w_2,\ldots,w_n)$ representing vertex weights, and asked whether there is a graph with vertex set $V=\{v_1,v_2,\ldots,v_n\}$ such that for each $v_i$ the sum of the weights of its neighbors is equal to $d_i$. 

In today's interconnected world, many fields face the challenge of structuring systems with specific connectivity requirements.
For instance, in social network analysis, building a network where individuals (vertices) have a fixed number of connections (degree) is crucial for analyzing influence, community structures, and information diffusion. 
Similarly, urban planners confront similar issues when designing road networks, where intersections must be connected with a specific number of roads to optimize traffic flow. 
These examples highlight the significance of addressing connectivity challenges in various fields where specific connectivity patterns must be achieved. 
One of the most studied problems in this context is the realization problem that deals with degree sequences.  
According to Bar-Noy, Böhnlein, Peleg, and Rawitz~\cite{bar2022vertex}, \GRfull{} and its variants have interesting applications in network design, randomized algorithms, analysis of social networks, and chemical networks.

In this paper, in the same flavor as Bar-Noy, Peleg, and Rawitz~\cite{bar2020vertex}, we introduce another natural variant of the \GRfull{}, which we propose calling \textsc{Graph Realization with Cut Constraints}. 
First, we start with some definitions. For a vertex set $V = \srange{v_1}{v_n}$, a cut list is defined as a list of pairs $\call = \{(S_1, \ell_1), \dots, (S_m, \ell_m)\}$, where each pair $(S_j, \ell_j) \in \mathcal{L}$ consists of a nonempty, proper subset $S_j \subset V$ and a natural number $\ell_j$.
Given a cut list $\call$ for a set $V$ and a graph $G$ with vertex set $V$, we say that $G$ \emph{realizes} $\mathcal{L}$ if, for every pair $(S_j, \ell_j) \in \mathcal{L}$, the edge cut $\partial(S_j)$ has size $\ell_j$. For a cut list $\mathcal{L}$, we denote by $w(\call) = \max\limits_j\ \size{S_j}$ the largest size among the subsets $S_j$ in $\mathcal{L}$. 
Now, we define our problem:

\defproblema{Graph Realization with Cut Constraints (\GRC{})}
{A cut list $\call$ for a set of vertices $V=\srange{v_1}{v_n}$, and a non-decreasing sequence $\texttt{d} = (d_1, \dots, d_n)$ of natural numbers.}
{Does there exist a simple graph with vertex set $V$ such that, for every $i$, $v_i$ has degree $d_i$ and $G$ realizes $\call$?}

Recall that the degree of a vertex $v_i$ of a graph $G$ is the size of the trivial edge cut $\partial(\{v_i\})$ in $G$. Therefore, one can see \GRfull{} as given a cut list \texttt{d} of all trivial edge cut sizes $(\{v_i\},d_i)$, decide whether there is a graph $G$ realizing \texttt{d}.
In \GRC{}, we assume that  $\call$ is a list of some nontrivial edge cut sizes for the realizing graph, i.e., each $S_j$ has a size of at least two and at most $n-2$.
If~$\call=\emptyset$, then the problem becomes the original \GRfull~problem. So, through this work, we always consider $\call\neq \emptyset$ and $w(\call)\geq 2$.

%
Although our problem, as we have defined, has never been explored before, it was motivated by an active research topic with several recent results \cite{Ap22,Li24} that aims to learn an unknown graph $G$ or properties of $G$ via cut-queries.
In this context, given a graph $G = (V, E)$ with a known vertex set but an unknown edge set, the objective is to reconstruct $G$ or compute some property of $G$ with a minimal number of queries. A cut-query receives $S\subseteq V$ as input and returns the size of the edge cut $\partial(S)$.
One of the main driving interests in this model is its connection with submodular function minimization \cite{Aa05}. Furthermore, these active learning questions have applications in fields like computational biology \cite{Vlad98} and relate to data summarization, where queries reveal ``relevant information'' about the graph. More generally, this type of question can be viewed as a means of determining a property of an unknown object via indirect queries about it \cite{Ai88,Du00}.

Within this framework, our problem can be viewed as a validity check to test whether the cut queries are consistent and if there is some graph that satisfies them. Also, it can be viewed as a variant, where we cannot choose the queries, but rather, we are given cut constraints and want to find one satisfying candidate graph. 
Concerning cut-queries, knowing $\partial(\set{u}), \partial(\set{v}),$ and $\partial(\set{u,v})$, we find out whether or not there is an edge between $u$ and $v$ in $G$. Thus, with at most $\binom{n}{2}+n$ queries, one can always obtain the edge set of  $G$. Similarly, regarding \GRC{}, if $\call$ contains all possible sets of size two, then the problem is trivial. Therefore, in this paper, we are mainly interested in the case where $\call$ has polynomial size with respect to $n$ and does not contain all sets of size two.

\paragraph{Our Contribution.} 
In this work, we study the \GRC{} problem and provide a comprehensive characterization of its computational complexity, focusing on the size of the cut sets involved. 
%
We show that it is polynomial-time solvable for instances where $w(\call)\le3$. Specifically, when $w(\call)=2$, the problem reduces to the classic $f$-factor problem,  which can be solved in polynomial time. 
Additionally, we show that cuts of size three, surprisingly, do not increase complexity. Instances involving such cuts can be transformed into equivalent ones where size-three constraints are replaced by cut constraints involving only sets of size at most two, all while preserving the same realizability. 
This shows that even with cut constraints using sets of size three, the problem remains polynomial-time solvable. 
%
On the other hand, we also prove that when cut sets of size four or larger are allowed, the problem cannot be solved in polynomial time unless P~$=$~NP. 
This provides a complete dichotomy regarding the computational complexity of the problem and the size of the cut sets.
In addition, we also prove the NP-completeness for $w(\call)= 6$ when the cut constraints restrict the possibility graph, formally defined in Section~\ref{sec:preliminaries}, to be subcubic and bipartite. 
In contrast, when the cut constraints restrict the possibility graph to be a tree, the problem is solvable in polynomial time. 
%

\paragraph{Related work.}
Several other generalizations and related problems exist in the study of degree sequences and graph realizability. 
Aigner and Triesch~\cite{aigner1994realizability} explored the realizability and uniqueness of graphs based on two types of invariants (degree sequences and induced subgraph sizes), focusing on both directed and undirected graphs and their computational complexity.
Similarly, Erdős and Miklós~\cite{erdos2018} discussed the complexity of degree sequence problems, focusing on the second-order degree sequence problem, which is shown to be strongly \classNPC{}. 
Erdős et al.~\cite{erdos17} presented a skeleton graph structure for a more general restricted degree sequence problem, studying two cases with specific edge restrictions and examining the connectivity of the realization space.
Iványi~\cite{Iv12} explored conditions and algorithms for determining if a sequence is the degree sequence of an $(a, b, n)$-graph,
which is a (directed or undirected) graph whose vertices degrees are in the $[a, b]$ range.

Another field in Graph Theory that is closely related to the \GRfull{}~problem is the study of graph factors and factorizations.
A \textit{factor} of a graph $G$ is simply a spanning subgraph of $G$.
There have been several studies on graph factors under different constraints, such as conditions on their degrees or restrictions on the classes that they must belong to.
Here, we are particularly interested in graph factors described by their degrees, which we call degree factors.
In this context, given an integer $k$, a \emph{$k$-factor} of a graph $G$ is a $k$-regular spanning subgraph of $G$.
This generalizes many problems, for instance a $1$-factor is the same as a perfect matching, and studies in this area date back to the $19$th century when Petersen \cite{petersen1900} gave one of the first sufficient conditions for a $1$-factor.

The concept of $k$-factors has been generalized to consider other values of degrees rather than a fixed number.
Given two functions $g, f \colon V \to \N$ such that $g \leq f$, a spanning subgraph $H$ of the graph $G = (V, E)$ is a \emph{$(g, f)$-factor} if for every vertex $v$, it holds that $g(v) \leq d_H(v) \leq f(v)$.
If $g = f$, then it is simply called an \emph{$f$-factor}.
The problem of determining if a graph admits a $(g, f)$-factor is known to be solvable in polynomial time \cite{Ans85}.
As will be shown later, the \GRC{} problem generalizes the $f$-factor problem.
Classical results in this field include 
Tutte's theorem on $f$-factors \cite{tutte1952} 
and Lovasz's characterization of~\mbox{$(g,f)$-factors}~\cite{Lovasz70}.
For a detailed treatment of this topic, we refer the reader to the surveys of Akiyama and Kano \cite{akiyama1985} and Plummer~\cite{plummer2007}.

\paragraph{Organization of the text.}
The remainder of this paper is organized as follows. 
In \cref{sec:preliminaries}, we introduce key definitions and notations related to the \GRC{} problem, including some conditions for the realizability of a \GRC{} instance. 
%
In \cref{sec:small_cuts}, we investigate the \GRC{} problem with cut sizes restricted to three,
while \cref{sec:large_cuts} focuses on instances with cuts of size at least four.
Finally, in \cref{sec:final_remarks}, we summarize our results and discuss potential extensions of this work. 
Due to space constraints, some proofs have been omitted.


\section{Preliminaries}
\label{sec:preliminaries}

Let $S$ be a set of vertices of a graph $G=(V, E)$.
We denote the total degree of vertices in $S$ by $d(S) = \sum_{u \in S} d_u$, where $d_u$ represents the degree of vertex $u$.
A simple observation is that the size of the edge cut $\partial(S)$ is determined by the degree of the vertices in $S$ and the edges between vertices of $S$.
If there are $k$ edges between vertices of $S$, then $\size{\partial(S)}=d(S)- 2k$.
Since the number of edges in $S$ may vary from $0$ to $\binom{\size{S}}{2}$, a necessary condition for the realizability of an \GRC{} instance is as follows.

\begin{remark}
\label{thm:feasible_cut_sizes}
    A \GRC{} instance $(\texttt{d}, \call)$ is realizable only if, for each cut $(S, \ell) \in \call$, we have $\ell \in \set{ d(S) - 2k : 0 \leq k \leq \binom{\size{S}}{2} }$.
\end{remark}

Since this condition is easily verifiable, we assume henceforth that it holds for any \GRC{} instance. In particular, for cuts of size two, this observation implies that only two feasible values are possible, determining whether an edge must exist between the corresponding vertices, as detailed below.

\begin{remark}
\label{thm:fixed_forbidden_edges}
    Given an instance $I = (\texttt{d}, \call)$ of \GRC{}, in any realization $G$ of $I$, if $(\set{u, v}, d_u + d_v - 2) \in \call$, then $uv \in E(G)$, and if $(\set{u, v}, d_u + d_v) \in \call$, then $uv \notin E(G)$.
\end{remark}

Based on this, we say that an edge $uv$ is \textit{fixed} if $(\set{u, v}, d_u + d_v - 2) \in \call$ and is \textit{forbidden} if $(\set{u, v}, d_u + d_v) \in \call$.
We apply similar terminology when constructing an instance of \GRC{}.
Given an instance $(\texttt{d}, \call)$ of \GRC{}, to \textit{fix} or \textit{forbid} an edge $uv$ means adding the cut $(\set{u, v}, d_u + d_v - 2)$ or $(\set{u, v}, d_u + d_v)$ to $\call$, respectively.

\cref{thm:fixed_forbidden_edges} implies that the \GRC{} problem, when limited to cuts of size two, is equivalent to the \GR{} problem with added constraints: a subset of edges is fixed, and another disjoint one is forbidden. Moreover, we can simplify the problem by focusing only on forbidden edges by reducing the degree of vertices incident to fixed edges and then marking those edges as forbidden.
Formally, given an instance $(\texttt{d}, \call)$ and a cut $(\set{u, v}, d_u + d_v - 2) \in \call$, in which case the edge $uv$ is fixed, we can produce an equivalent instance $(\texttt{d}', \call')$ as follows.  For all $i \notin \{u, v\}$  set $d'_i = d_i$. Reduce $d'_u = d_u - 1$ and   $d'_v = d_v - 1$;  
    %
and $\call'$ is obtained from $\call$ by replacing $(\set{u, v}, d_u + d_v - 2)$ with $(\set{u, v}, d_u + d_v)$.

The resulting instance $(\texttt{d}', \call')$ has a realization if and only if $(\texttt{d}, \call)$ has a realization. If $G = (V, E)$ is a realization of $(\texttt{d}, \call)$, then, as discussed above, we must have $uv \in E$, and $G - uv$ is a realization of $\call'$. Conversely, if $G' = (V, E')$ is a realization of $(\texttt{d}', \call')$, then necessarily $uv \notin E'$ due to the cut $(\set{u, v}, d_u + d_v)$, and $G' + uv$ is a realization of $(\texttt{d}, \call)$.

Thus, cut restrictions involving sets of size two can be simply reinterpreted as forbidding edges.
Let $F$ be the set of all forbidden edges that cannot appear in any realization of instance $(\texttt{d}, \call)$. Then $\calg = K_n - F$ is what we call the \emph{possibility graph}, which must be a supergraph of any valid realization of~$(\texttt{d}, \call)$.


\section{Small Cuts}
\label{sec:small_cuts}

In this section, we show that the \GRC{} problem can be solved in polynomial time for instances~$(\texttt{d}, \call)$ where $w(\call) \leq 3$.
Reinterpreting the size-two cuts of $\call$ as forbidden edges allows us to transform the \GRC{} problem into an equivalent formulation of the classic $f$-factor problem whenever $w(\call) = 2$. This leads us to the following conclusion.

\begin{lemma}
\label{thm:size2}
    Any instance $I=(\texttt{d}, \call)$ of the \GRC{} problem can be solved in polynomial time if $w(\call) = 2$.
\end{lemma}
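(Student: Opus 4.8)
The plan is to observe that when $w(\call) = 2$, the problem is, after a purely syntactic cleanup, literally an instance of the $f$-factor problem on the possibility graph. Concretely, I would first scan the cut list $\call$: by the standing feasibility assumption (\cref{thm:feasible_cut_sizes}), every pair $(\set{u,v}, \ell) \in \call$ has $\ell \in \set{d_u + d_v - 2,\ d_u + d_v}$, so by \cref{thm:fixed_forbidden_edges} it either fixes the edge $uv$ (when $\ell = d_u + d_v - 2$) or forbids it (when $\ell = d_u + d_v$). If some edge is simultaneously fixed and forbidden by two pairs, the instance is immediately a no-instance; this is detected in time linear in $\size{\call}$.

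Next I would apply the degree-reduction transformation described in \cref{sec:preliminaries}: for each fixed edge $uv$, decrement $d_u$ and $d_v$ by one and re-mark $uv$ as forbidden. This yields an equivalent instance $(\texttt{d}', \call')$ in which every remaining constraint forbids an edge, and, as already argued in the preliminaries, realizations of the two instances are in bijection by adding back the fixed edges. Collecting all forbidden edges into a set $F$ gives the possibility graph $\calg = K_n - F$.

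The key observation is then that a graph $G$ on $V$ realizes $(\texttt{d}', \call')$ if and only if $G$ is a spanning subgraph of $\calg$ in which every vertex $v_i$ has degree exactly $d_i'$, i.e.\ $G$ is an $f$-factor of $\calg$ for the target $f(v_i) = d_i'$. This is immediate by construction: forbidding an edge is exactly the statement that it is absent from $\calg$, and the degree requirements are verbatim the $f$-factor conditions. Since the $f$-factor problem is solvable in polynomial time \cite{Ans85}, running that algorithm on $\calg$ with $f = \texttt{d}'$ and returning its answer decides the original instance in polynomial time.

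There is no genuine obstacle here, since the substantive content has already been discharged in the preliminaries (the fixed/forbidden dichotomy and the degree-reduction equivalence). The only points requiring care are the bookkeeping for edges that are both fixed and forbidden, and verifying that the adjusted degrees remain valid inputs to the $f$-factor routine (for instance, that no $d_i'$ drops below zero, which would again signal a no-instance). Everything else is a direct translation between the two problem formulations.
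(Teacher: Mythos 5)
Your proposal is correct and follows essentially the same route as the paper's proof: convert fixed edges to forbidden ones via the degree-reduction equivalence from the preliminaries, form the possibility graph $\calg$, and solve an $f$-factor instance on it using Anstee's algorithm \cite{Ans85}. Your extra bookkeeping (rejecting edges that are simultaneously fixed and forbidden, and degrees dropping below zero) is a harmless refinement the paper leaves implicit.
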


\begin{proof}
    Given the instance $I$, we apply the aforementioned method to fixed edges to produce an equivalent instance $I'=(\texttt{d}', \call')$ containing only forbidden edges.
    The problem then reduces to finding a subgraph of the possibility graph $\calg$ of $I'$ that realizes the degree list~$\texttt{d}'$.
    By interpreting $\texttt{d}'$ as a function $f \colon V \to \N$, the problem becomes finding an $f$-factor of $\calg$, which is solvable in cubic time using, for example, the algorithm of Anstee \cite{Ans85}. \qed
\end{proof}

Interestingly, the \GRC{} problem remains solvable in polynomial time even when cuts of size three are present.
This is because cuts of size three actually have no more restraining power on realizability than cuts of size two, in the sense that we can construct an equivalent instance containing only cuts of size at most two that maintain the same realizability as the original instance.

{

\def \scaling {0.7}

\begin{figure}[!b]

\begin{subfigure}[b]{0.49\textwidth}
\centering       
\begin{tikzpicture}[scale=\scaling, transform shape]
    \sample
    \node[above, black] at (- 1.3, 0) {$d(S)$};
     \foreach \x/\y/\z in {H1/T111/H2, H3/T121/H4, H5/T122/H6, H7/T211/H8, H9/T221/H10, H11/T222/H12} 
       \draw[/edgeType4] (\x) -- (\y) -- (\z);     
\end{tikzpicture}
\caption{$\ell = d(S)$}
\label{fig:proof_w=3_l=dS}
\end{subfigure}    
%
%
\begin{subfigure}[b]{0.49\textwidth}
\centering
\begin{tikzpicture}[scale=\scaling, transform shape]
    \sample
    \node[above, black] at (- 1.3, 0) {$d(S) - 6$};
    \draw[/edgeType4] (T111) -- (T121) -- (T122) -- (T111);
    \draw[/edgeType1] (T211) -- (T221) -- (T222) -- (T211); 
    \draw[/edgeType4] (T211) -- (T221) -- (T222) -- (T211); 
\end{tikzpicture}
\caption{$\ell = d(S)-6$}
\label{fig:proof_w=3_l=dS-6}
\end{subfigure}
%
%
\end{figure} 
\begin{figure}[!ht]\ContinuedFloat  
\centering     

\begin{subfigure}[b]{0.49\textwidth}
\centering  
\begin{tikzpicture}[scale=\scaling, transform shape]
    \sample
    \node[above, black] at (- 1.3, 0) {$d(S) - 2$};
    \node[/nodeType, draw] (X) at (6, .5) {$x$};
    \foreach \x/\y in {X/T211, X/T221, X/T222} 
       \draw[/edgeType1] (\x) -- (\y); 
    
    \foreach \x/\y in {H1/T111, H2/T111, H3/T121, H5/T122, T121/T122, H7/T211, H8/T211, H9/T221, T221/X, X/T222, H11/T222} 
       \draw[/edgeType4] (\x) -- (\y);      
\end{tikzpicture}
\caption{$\ell = d(S) - 2$}
\label{fig:proof_w=3_l=dS-2}
\end{subfigure}
%
%
\begin{subfigure}[b]{0.49\textwidth}
\centering
\begin{tikzpicture}[scale=\scaling, transform shape]
    \sample
    \node[above, black] at (- 1.3, 0) {$d(S) - 4$};
    \node[/nodeType, draw] (X) at (6, .9) {$x$};
    \node[/nodeType, draw] (Y) at (2, -.3) {$y$};
    \foreach \x/\y in {X/T211, X/T221, X/T222, Y/T211, Y/T221, Y/T222} 
       \draw[/edgeType1] (\x) -- (\y);

    \foreach \x/\y/\z in {H2/T111/T121, T121/T122/H5, H8/T211/X, X/T221/Y, X/T222/H11} 
       \draw[/edgeType4] (\x) -- (\y) -- (\z);
\end{tikzpicture}
\caption{$\ell = d(S) - 4$}
\label{fig:proof_w=3_l=dS-4}
\end{subfigure}

\caption{
    Illustration of all cases for a cut $(S, \ell)$ with $S = \set{u, v, w}$, assuming $d_u = d_v = d_w = 2$ (so $d(S) = 6$). Solid edges represent possible edges, dashed edges are forbidden, and blue-highlighted edges belong to a realization. In each case, the left image shows a realization satisfying $(S, \ell)$, while the right image shows the equivalent realization of the modified instance without the cut.
}

\end{figure}
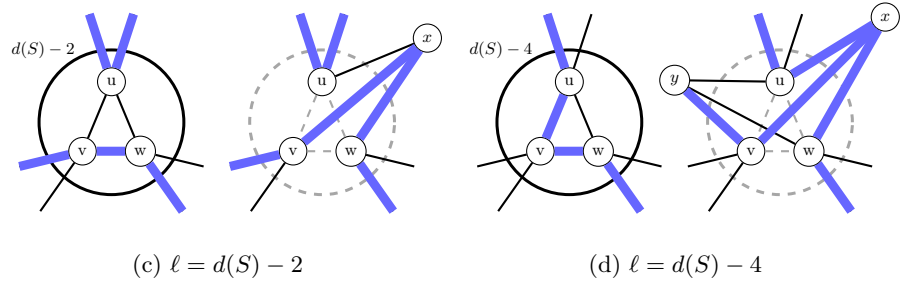
}

\begin{theorem}
\label{thm:size3}
    Any instance $I=(\texttt{d}, \call)$ of the \GRC{} problem can be solved in polynomial time if $w(\call) = 3$.
\end{theorem}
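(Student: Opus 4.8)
The plan is to reduce any instance with $w(\call)=3$ to an equivalent instance with $w(\call')\le 2$ and then invoke \cref{thm:size2}. As in the proof of \cref{thm:size2}, I would first reinterpret every size-$2$ cut as a fixed or forbidden edge (via the construction of \cref{sec:preliminaries}), so that the only genuinely new feature to deal with is the collection of size-$3$ cuts. Fix one such cut $(S,\ell)$ with $S=\{u,v,w\}$. By \cref{thm:feasible_cut_sizes} we have $\ell=d(S)-2k$ for some integer $k$ with $0\le k\le\binom{3}{2}=3$, and since $\size{\partial(S)}=d(S)-2\cdot(\text{number of edges inside }S)$, this $k$ is precisely the number of the three pairs $uv,uw,vw$ that must be edges in any realization. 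Thus a single size-$3$ cut is \emph{exactly} the local constraint ``exactly $k$ of $uv,uw,vw$ are present.''

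I would then split into cases on $k$. The cases $k=0$ and $k=3$ are immediate: $k=0$ is equivalent to forbidding all three edges $uv,uw,vw$, and $k=3$ to fixing all three, each replaceable by three size-$2$ cuts (equivalently, forbidden/fixed edges) with the same realizability. The interesting cases are $k=1$ and $k=2$, where the constraint is a genuine ``exactly-one-of-three'' (resp.\ ``exactly-two-of-three'') condition: it is \emph{not} a conjunction of independent per-edge fix/forbid decisions, and the three edges in question lie entirely inside $S$, so they cannot be seen by simply adding an auxiliary neighbor of $u,v,w$ (such edges would leave $S$ and thus affect $\size{\partial(S)}$ the wrong way). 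For these two cases the plan is to attach a local gadget on $O(1)$ \emph{fresh} auxiliary vertices, specifying their degrees and a set of fixed/forbidden edges (all size-$\le 2$ cuts), together with a bounded increase of $d_u,d_v,d_w$ to account for the edges the gadget attaches to them. The gadget would be designed so that, in any realization of the modified instance, the auxiliary edges force the number of surviving internal edges among $u,v,w$ to be exactly $k$; deleting the auxiliary vertices then yields a graph on the original vertex set with the prescribed degrees and exactly $k$ internal edges in $S$. This is naturally phrased in the ``ports'' picture of the $f$-factor reduction: one thinks of each unit of required degree as a stub, routes the three potential intra-$S$ connections through a shared degree-constrained region, and lets the gadget count them, which matches the structure suggested by the auxiliary vertices $x,y$ in Figures~\ref{fig:proof_w=3_l=dS-2} and~\ref{fig:proof_w=3_l=dS-4}.

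The key lemma to establish is the realization biconditional for each gadget: a graph on the extended vertex set realizes the modified degrees and the gadget's size-$\le 2$ constraints if and only if its restriction to the original vertices realizes $\texttt{d}$ and places exactly $k$ edges among $u,v,w$. I would prove both directions by explicitly exhibiting the extension (given a good configuration of the triangle, complete it to the gadget) and the restriction (read off that any gadget-consistent realization has the required internal count and correct original degrees). Since each size-$3$ cut is handled with its own disjoint set of auxiliary vertices, the gadgets interact only through the shared vertices $u,v,w$ and only by the controlled degree increments, so the replacements compose and the equivalences can be applied simultaneously to all size-$3$ cuts.

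Finally, the resulting instance $I'=(\texttt{d}',\call')$ uses only cuts of size at most two, has $|V'|$ and $|\call'|$ polynomial in the input size (a constant number of auxiliary vertices and constraints per original size-$3$ cut), and is realizable exactly when $I$ is. Applying \cref{thm:size2}---i.e.\ solving the associated $f$-factor problem on the possibility graph of $I'$ with Anstee's algorithm~\cite{Ans85}---then decides $I$ in polynomial time. I expect the main obstacle to be the explicit design of the $k=1$ and $k=2$ gadgets and the verification of the biconditional: one must make the gadget enforce the ``exactly $k$ of three'' count tightly (neither over- nor under-constraining), while guaranteeing that the increments to $d_u,d_v,d_w$ are consumed precisely by the intended auxiliary edges so that restricting to the original vertices recovers the correct degrees and internal-edge count.
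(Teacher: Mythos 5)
Your skeleton is the paper's route: translate $\ell$ via \cref{thm:feasible_cut_sizes} into ``exactly $k$ of $uv,uw,vw$ are present'' for $k\in\{0,1,2,3\}$, dispatch $k=0$ and $k=3$ by forbidding/fixing all three edges, replace $k=1,2$ by constant-size gadgets with fresh auxiliary vertices, note the gadgets are disjoint and compose, and finish with \cref{thm:size2}. However, the heart of the theorem---the $k=1$ and $k=2$ gadgets---is left unconstructed, and the specification you commit to for them is one that no gadget can satisfy. Your key lemma requires that deleting the auxiliary vertices from any realization of the modified instance yields a graph realizing the original \texttt{d}; since the degrees $d'_u,d'_v,d'_w$ are fixed, this forces each of $u,v,w$ to receive a \emph{fixed} number of gadget edges in every realization. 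But a gadget whose attachment counts are the same in every realization carries no information: given any graph on the original vertices with the correct degrees and \emph{any} number of internal edges in $S$, one can attach the gadget identically. With only degree constraints and size-$2$ cuts available, the vertex $u$ cannot distinguish an edge to $v\in S$ from an edge leaving $S$---both consume one unit of $d_u$---so a ``counter'' with fixed degree increments cannot force the internal count to equal $k$. The one mechanism that could correlate the gadget with the internal edges (variable attachment) is exactly what your restriction-based biconditional rules out.

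The paper resolves this by doing the opposite of your lemma: it \emph{forbids} all three edges inside $S$ and lets auxiliary vertices absorb the degree the chosen internal edges would have consumed, leaving $d_u,d_v,d_w$ unchanged (no increments at all). For $\ell=d(S)-2$, a single vertex $x$ with $d_x=2$, connectable only to $S$: its two neighbors encode the one selected edge. For $\ell=d(S)-4$, a vertex $x$ with $d_x=3$ whose three edges to $S$ are fixed, plus $y$ with $d_y=1$ connectable only to $S$: since any two edges of a triangle share an endpoint (the special fact about $\size{S}=3$ you correctly identified), the choice of two internal edges is equivalent to choosing which vertex loses $2$, and the $y$-neighbor encodes it. The equivalence is then an explicit edge swap (e.g., replace $vw$ by $xv,xw$), not a vertex-deletion restriction. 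Finally, your parenthetical worry that gadget edges ``would leave $S$ and thus affect $\size{\partial(S)}$ the wrong way'' is moot---the cut $(S,\ell)$ is removed from $\call'$ when it is replaced, so nothing constrains $\partial(S)$ afterwards---and this mistaken premise appears to be what steered you away from the direct construction.
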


\begin{proof}
    We will show that it is possible to construct, in polynomial time, an instance $(\texttt{d}', \call')$ such that $w(\call') = 2$ and $(\texttt{d}', \call')$ is realizable if and only if $(\texttt{d}, \call)$ is realizable.
    This will complete our proof by applying \cref{thm:size2} to~$(\texttt{d}', \call')$.
    To achieve this, consider a cut $(S, \ell) \in \call$ where $S = \set{u, v, w}$.
    From \cref{thm:feasible_cut_sizes}, we know there are exactly four possible values for $\ell$: $d(S)$, $d(S) - 2$, $d(S) - 4$, and $d(S) - 6$.
    In each case, we show that $(S, \ell)$ can be replaced by cuts of size two, along with, possibly, some additional vertices. 
    Recall that forbidding or fixing an edge $uv$ is a constraint that we can express through a cut constraint $(\set{u, v}, d_u + d_v)$ or $(\set{u, v}, d_u + d_v - 2)$, respectively.

    Case 1: $\ell = d(S)$.
    In this case, all edges incident to $S$ must be included in the edge cut $\partial(S)$. So, this cut effectively forbids the edges $uv$, $uw$, and $vw$, as shows \cref{fig:proof_w=3_l=dS}.

    Case 2: $\ell = d(S) - 6$.
    This case is similar to Case 1, but we require here all three edges between vertices in $S$ to be present. Therefore, we fix the edges $uv$, $uw$, and $vw$, as in \cref{fig:proof_w=3_l=dS-6}.

    Case 3: $\ell = d(S) - 2$.
    This cut enforces that exactly one edge within $S$ must be included in any realization. Equivalently, this constraint requires selecting two vertices from $S$ to decrease their degrees by $1$ each.

    To eliminate this cut from $\call$ (see \cref{fig:proof_w=3_l=dS-2}), we proceed as follows.
    We create a new vertex $x$, set $d_x = 2$, and forbid all edges between $x$ and vertices outside $S$.
    Additionally, we forbid the edges between the vertices within $S$.
    In this setting, the two vertices in $S$ adjacent to $x$ will simulate the selection of an edge in a realization of the original instance.
    Assume, without loss of generality, that a realization $G$ of the original instance exists with $vw \in E(G)$. Then, in the modified instance, a realization $G'$ exists in which $x$ is adjacent to both $v$ and $w$ and $vw$ is not present. The converse also holds, ensuring that this modification to $\call$ preserves the realizability of the instance.
    
    Case 4: $\ell = d(S) - 4$.
    In this case, exactly two edges within $S$ must be included in any realization.
    Following the same rationale as in the previous case, this amounts to the degrees of two vertices in $S$ being reduced by $1$, while the degree of the remaining vertex is reduced by $2$. Note that since we only have three vertices and, therefore, three possible edges, the choice of the two edges can be defined by selecting which vertex of $S$ will have its degree decreased by 2.

    This can be equivalently accomplished by proceeding as follows (see \cref{fig:proof_w=3_l=dS-4}).
    We create two new vertices, $x$ and $y$, and set $d_x = 3$ and $d_y = 1$.
    We fix all three edges from $x$ to~$S$, forbid all edges between $y$ and vertices outside~$S$, and forbid the edges within $S$. Note that the fixed edges from $x$ to $S$ reduce the degree of each vertex in $S$ by $1$, while the vertex in $S$ that connects to $y$ will have its degree reduced by an additional $1$, simulating the required decrease~of~$2$.
    Therefore, without loss of generality, there is a realization $G$ of the original instance such that $uv, vw \in E(G)$ if and only if there is a realization $G'$ of the modified instance with $xu, xv, xw, yv \in E(G')$ and $uv, vw \notin E(G')$.

    \paragraph{}
    We apply these modification rules to each cut $(S, \ell)$ of size three in $\call$, resulting in a new instance $I'=(\texttt{d}', \call')$ with $w(\call') = 2$ and the same realizability as $(\texttt{d}, \call)$. In Cases 1 and 2, each cut $(S, \ell)$ is replaced by three smaller cuts, while in Cases 3 and 4, $\calo(n)$ additional cuts are required. Nevertheless, the total size of $\call'$ and the number of vertices are only increased polynomially. Therefore, by applying \cref{thm:size2} to $I'$, we solve our original instance in polynomial time.
    \qed
\end{proof}

\section{Large Cuts}
\label{sec:large_cuts}

Now we discuss the \GRC{} with $w(\call) \ge 4$. Interestingly, we get a dichotomy and can no longer solve \GRC{} within polynomial time unless $\classP=\classNP$. Our hardness result holds even if \texttt{d} is a sequence of ones. Additionally, we explore restrictions over the possibility graph $\calg$ and show that \GRC{} is \classNPC{} for $w(\call) = 6$ even if $\calg$ is bipartite and subcubic. In contrast, if $\calg$ is a tree, we argue how the problem can be efficiently solved.

\subsection{Cuts of Size Four}

Regarding the size of cuts, one might initially think that the approach of \cref{thm:size3}, which reduces an instance with $w(\call) = 3$ to one with $w(\call) = 2$, could be extended to larger cuts.
However, this extension is not feasible. For cuts $(S, \ell)$ of size three, the number of edges within $S$ uniquely determines how much the degrees of each vertex are affected. In contrast, when $\size{S} = 4$, this property already does not hold.
Consider, for instance, a cut $(S, \ell)$ where $S = \set{u, v, w, x}$ and $\ell = d(S) - 4$. In any realization, there must be exactly two edges within $S$.
If, in a realization $G$, these edges are disjoint (e.g., $uv, wx \in E(G)$), then each vertex in $S$ has its degree decreased by $1$.
On the other hand, the edges might not be disjoint (e.g., $uv, uw \in E(G)$). In this case, one vertex has its degree decreased~by~$2$, two vertices have a reduction of $1$, and one vertex remains unchanged.
Since it is impossible to determine beforehand which of these configurations applies, the reduction strategy used in Theorem~\ref{thm:size3} cannot be generalized.

In fact, we show that when $w(\call) = 4$, the \GRC{} problem cannot be solved in polynomial time unless $\classP{} = \classNP{}$.
We use a restricted variant of the {$1$-in-$3$-SAT} problem \cite{Ga79} in our proof.
%
%
%
In our case, we consider propositional formulas in conjunctive normal form where every variable appears exactly three times, two times as a positive literal (i.e., not negated) and one time as a negative literal (i.e., negated). We ask if it is possible to find a satisfying assignment such that each clause has exactly one literal that evaluates to true while the rest are false.
Additionally, we require that each clause has two or three literals (it is trivial to handle clauses with only one literal, so we assume they are preprocessed away).
We call this variant \rXthreeSAT{}, and although pretty restricted, this problem remains \classNPC{}.


\defproblema{
\rXthreeSAT{}}
{
A set of variables $X$ and a formula~$\phi$ in conjunctive normal form over $X$ such that:
\begin{itemize}
    \item each variable of $X$ occurs twice as a positive literal and once as a negative literal;
    \item each clause of $\phi$ has two or three literals.
\end{itemize}
}
{
Is there a truth assignment of $X$ such that exactly one literal in every clause of $\phi$ is true?
}

\begin{lemma}
\label{thm:2-1-xsat}
    \rXthreeSAT{} is \classNPC{}.
\end{lemma}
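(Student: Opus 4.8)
The plan is to prove \classNPC{}-ness of \rXthreeSAT{} by a reduction from the standard \textsc{Positive $1$-in-$3$-SAT} (or ordinary $1$-in-$3$-SAT), which is known to be \classNPC{} by Schaefer's dichotomy~\cite{Ga79}. Membership in \classNP{} is immediate: given a candidate assignment, one checks in polynomial time that every clause has exactly one true literal. The real work is the hardness direction, where I must massage an arbitrary $1$-in-$3$-SAT instance into one respecting the rigid occurrence pattern — each variable appearing exactly twice positively and once negatively — while preserving the exactly-one-true-per-clause semantics.

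First I would control the occurrences of each variable. The key gadget is an \emph{equality chain}: to force several copies $x^{(1)}, x^{(2)}, \ldots, x^{(k)}$ of a variable to take the same value, I introduce auxiliary variables and $1$-in-$3$ clauses that propagate equality. A convenient building block is that, under the $1$-in-$3$ regime, a two-literal clause $(a \lor b)$ forces exactly one of $a,b$ true, i.e. $b = \lnot a$; chaining such constraints (together with fresh negation-carrying variables) lets me replicate a literal's truth value across many clause positions. By splitting each original variable into a chain of copies, one per occurrence, I can guarantee that every resulting variable is used a bounded, prescribed number of times. The main obstacle will be hitting the occurrence counts \emph{exactly} ($2$ positive, $1$ negative) rather than merely ``at most'' — this requires padding: for any variable that falls short, I add dummy clauses or dummy variables whose sole purpose is to absorb the surplus/deficit of occurrences without changing satisfiability. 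Designing these padding gadgets so that they are always satisfiable in the $1$-in-$3$ sense, and so that they cannot interfere with the truth value of the variables they pad, is the delicate part.

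Next I would handle clause width. Since $1$-in-$3$-SAT clauses have three literals but \rXthreeSAT{} permits two or three, clauses of size three pass through unchanged; I only need to ensure the preprocessing never produces a clause of size one (which, as the excerpt notes, is trivially handled and can be assumed away). The equality and padding gadgets should be written using only clauses of width two or three, which is easy to arrange. Throughout, I would verify the two directions of the reduction: a satisfying $1$-in-$3$ assignment of the original instance extends to one of the constructed instance by setting each copy variable to the common value and fixing the auxiliary/padding variables in their forced way; conversely, any $1$-in-$3$ assignment of the constructed instance restricts to a valid assignment of the original, because the equality chains force all copies of a variable to agree.

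Finally, I would confirm the reduction is polynomial: each original variable of degree $d$ spawns $O(d)$ copies and $O(d)$ auxiliary clauses, and since $\sum_v d_v$ is linear in the instance size, the blow-up is linear, with all clause widths in $\{2,3\}$ and all occurrence counts exactly $(2,1)$ by construction. I expect the hardest engineering step to be the simultaneous enforcement of the exact occurrence pattern and the exact clause-width restriction, because tightening one count tends to disturb the other; the resolution is to treat the auxiliary variables introduced for equality propagation as first-class objects that must themselves satisfy the $(2,1)$ pattern, which typically forces a second, smaller round of padding. Once the gadget's occurrence bookkeeping closes up consistently, correctness follows from the local semantics of each gadget.
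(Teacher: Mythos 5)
Your high-level route is the same as the paper's (membership in \classNP{} is trivial; hardness by reducing from Positive $1$-in-$3$-SAT via equality-propagating two-literal clauses), but there is a genuine gap at exactly the point you yourself flag as ``the delicate part'': you never exhibit the gadget that makes the occurrence counts come out to exactly two positive and one negative, and the building block you do commit to works against you. Your proposed clause $(a \lor b)$ with two positive literals, forcing $b = \neg a$, gives every auxiliary variable only \emph{positive} occurrences, so the mandatory single negative occurrence per variable is never produced. Your unspecified ``padding'' cannot repair this naively: the natural tautological padding clause $(v + \neg v)$ adds one positive and one negative occurrence simultaneously, so it fixes a missing negative occurrence only by overshooting the positive count, and the ``second, smaller round of padding'' you anticipate simply reproduces the same imbalance for each freshly introduced variable. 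Without a concrete device that closes the bookkeeping, the reduction is not finished. You also hedge between ordinary and Positive $1$-in-$3$-SAT as the source problem; this choice matters, since negated occurrences in the source formula would break the exact counting below.

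The paper closes the gap with a single clean gadget. It uses mixed-sign two-literal clauses $(a + \neg b)$, which under the exactly-one-true semantics force $a \equiv b$, and arranges them in a \emph{closed cycle}: for a variable $x$ with $t \geq 2$ occurrences it introduces copies $a_1, \dots, a_{t-1}$ and the clauses $(x + \neg a_1), (a_1 + \neg a_2), \dots, (a_{t-2} + \neg a_{t-1}), (a_{t-1} + \neg x)$, replacing the $i$th occurrence of $x$ beyond the first by $a_{i-1}$. Each cycle clause hands every participating variable exactly one extra positive and exactly one negative occurrence; because the source is the \emph{monotone} variant, each variable's single retained original occurrence is positive, so every variable ends with exactly the pattern $(2,1)$ and no padding is needed at all. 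A variable occurring once is handled by the degenerate cycle, i.e., the redundant clause $(x + \neg x)$. All new clauses have width two and all original clauses width three, so the clause-width restriction is automatic, and the blow-up is linear. With this gadget in place, the two correctness directions and the polynomiality argument go through exactly as you sketched them; without it, your proposal identifies the obstacle but does not overcome it.
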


\begin{proof}
    Notice that the problem is in \classNP{}. To prove that it is \classNPH{}, we show a reduction from Positive $1$-in-$3$-SAT, the monotone version of $1$-in-$3$-SAT in which all literals are positive. This problem is known as \classNPC{}~\cite{Ga79}.

    Let~($X$, $\phi$) be an input of the Positive $1$-in-$3$-SAT problem.
    We will show how to add clauses and variables to $(X, \phi)$ to obtain an equivalent instance $(X', \phi')$ of \rXthreeSAT{}. We start with $X' = X$ and $\phi' = \phi$.
    Let $x \in X$. First, consider that $x$ only occurs once in $\phi$ (we know it appears as a positive literal). Then we add to $\phi'$ the redundant clause $(x + \neg{x})$. Now $x$ occurs in $\phi'$ twice as a positive literal and once as a negative one, and $\phi'$ is equivalent to $\phi$.

    Now suppose $x$ has two appearances in $\phi$. Then we create a new variable $a$; we add the clause $(x + \neg{a})$ to set a logical equivalence between $x$ and $a$ (since exactly one of $x$ and $\neg{a}$ must be true in a satisfying truth assignment, $x \equiv a$ in any feasible assignment).
    This allows us to replace the second occurrence of $x$ with its equivalent variable~$a$. The resulting $\phi'$ is equivalent to $\phi$, and by also adding the redundant clause $(a + \neg{x})$ to $\phi$, both $x$ and the new variable $a$ occur twice as a positive literal and once as a negative one.

    Finally, we generalize this last idea. Lets say that $x$ occurs $t$ times, $t \geq 2$. We create $t-1$ variables, $a_1, a_2, \ldots, a_{t-1}$, which will be all equivalent to $x$. To this end, we add $t$ clauses to $\phi'$: $(x + \neg{a}_1), (a_1 + \neg{a}_2), \ldots, (a_{t-2} + \neg{a}_{t-1}), (a_{t-1} + \neg{x})$. Now $x$ and the new variables $a_1, \ldots, a_{t-1}$ are all equivalent, i.e., they must have the same truth value in any assignment that satisfies exactly one literal of every clause. In $\phi'$, we maintain the first appearance of $x$, but the second one is replaced by $a_1$, the third is replaced by $a_2$, and so forth. The resulting $\phi'$ is still equivalent to the original $\phi$, and if we do this for every variable in $X$, we obtain an instance $(X', \phi')$ that is an instance of \rXthreeSAT{}. Furthermore, we remark that in $\phi'$, every clause that comes from~$\phi$ has three literals, and every clause that we created for the reduction has two literals; thus, $\phi'$ only has clauses of sizes two and three.
    \qed
\end{proof}

To argue the hardness of \GRC{} with $w(\call) = 4$, it will be useful to know beforehand how many variables in a \rXthreeSAT{} instance must be set to true.
To this end, we consider a more restricted problem, which remains hard.


\defproblema{
\kXthreeSAT{}}
{
A tuple $(X, \phi, k)$, where $(X, \phi)$ is an instance of \rXthreeSAT{} and $k$ is a nonnegative integer.
}
{
Is there a feasible solution to $(X, \phi)$ in which exactly $k$ variables are assigned to true?
}

\begin{lemma}
\label{thm:k-true-xsat}
    \kXthreeSAT{} cannot be solved in polynomial time unless $\classP{} = \classNP{}$.
\end{lemma}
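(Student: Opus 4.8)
The plan is to give a polynomial-time Turing reduction from \rXthreeSAT{}, which is \classNPC{} by \cref{thm:2-1-xsat}, to \kXthreeSAT{}. The key observation is that although \kXthreeSAT{} pins down the exact number $k$ of variables set to true, the underlying \rXthreeSAT{} question only asks for the existence of \emph{some} feasible assignment, and the number of true variables in any assignment necessarily lies between $0$ and $\size{X}$.

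Given an instance $(X, \phi)$ of \rXthreeSAT{}, I would observe that $\phi$ admits a feasible assignment (one with exactly one true literal per clause) if and only if there is some $k \in \set{0, 1, \ldots, \size{X}}$ such that $(X, \phi, k)$ is a yes-instance of \kXthreeSAT{}. So, assuming a polynomial-time algorithm $A$ for \kXthreeSAT{}, the reduction runs $A$ on $(X, \phi, k)$ for each of the $\size{X} + 1$ candidate values of $k$, and answers \emph{yes} exactly when at least one of these calls returns \emph{yes}.

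Correctness is immediate from the equivalence above: any feasible assignment for $(X, \phi)$ sets some specific number $k^{*}$ of variables to true, so the call with $k = k^{*}$ succeeds; conversely, a successful call with parameter $k$ directly yields a feasible assignment for $(X, \phi)$. Since the reduction makes only $\size{X} + 1$ calls to $A$, each on an instance of size polynomial in $\size{(X, \phi)}$, the overall procedure runs in polynomial time whenever $A$ does. Hence a polynomial-time algorithm for \kXthreeSAT{} would yield one for \rXthreeSAT{}, forcing $\classP{} = \classNP{}$.

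The only conceptual point to flag is that this is a Turing reduction rather than a many-one reduction: one cannot fix a single target value $k$ in advance, since different feasible assignments of the same formula may set different numbers of variables to true. Iterating over all $\size{X}+1$ possibilities is what sidesteps this difficulty, and it is exactly what the stated conclusion---impossibility of a polynomial-time algorithm unless $\classP{} = \classNP{}$---permits. I do not expect any genuine combinatorial obstacle here; the content of the lemma is simply that ranging over the linearly many possible true-counts reconstructs the unconstrained feasibility question.
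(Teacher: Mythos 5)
Your proposal is correct and matches the paper's own proof essentially verbatim: both assume a polynomial-time algorithm for \kXthreeSAT{}, run it on $(X,\phi,k)$ for all $k \in \set{0,\ldots,\size{X}}$, and conclude that \rXthreeSAT{} would then be polynomial-time solvable, contradicting \cref{thm:2-1-xsat} unless $\classP{}=\classNP{}$. Your explicit remark that this is a Turing rather than many-one reduction is a sound clarification the paper leaves implicit, but the argument is the same.
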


\begin{proof}
    Follows directly from Lemma~\ref{thm:2-1-xsat}. Suppose that $\cala$ is an algorithm that decides \kXthreeSAT{} in polynomial time. Given an instance $(X, \phi)$ of \rXthreeSAT{} with $n$ variables, we run $\cala$ on inputs $(X, \phi, 0), \ldots, (X, \phi, n)$. If $\cala$ accepts any of these, we determine that the answer to $\phi$ is YES; otherwise, it is NO. Therefore, we can solve \rXthreeSAT{} in polynomial time, implying that $\classP{} = \classNP{}$.
    \qed
\end{proof}
\medskip


Equipped with the \kXthreeSAT{} problem and by knowing that it cannot be solved in polynomial time unless $\classP{} = \classNP{}$,
we can proceed to show the hardness of the \GRC{} problem when $w(\call) = 4$.

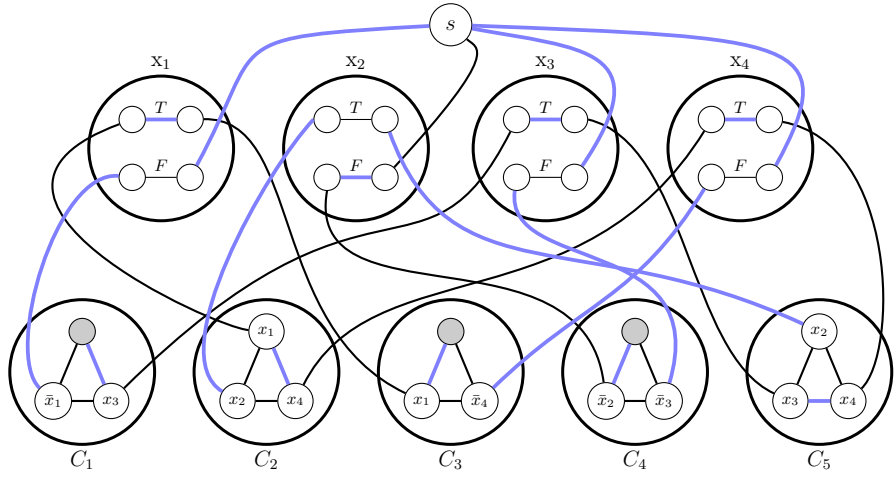
\begin{figure}[!ht]
    \centering
    \begin{tikzpicture}[scale=0.7, transform shape]
    \pgfkeys{/edgeType1/.style={solid, thick},
            /edgeType4/.style={blue!50,line width=0.5mm, draw opacity=0.7}}
    \node[draw, circle, minimum size=.8cm] (n) at (5.5, 1.25) {\large$s$}; 
    \squareGadget{1}{0}{0}{$x$}
    \squareGadget{2}{3.7}{0}{$x$}
    \squareGadget{3}{7.3}{0}{$x$}
    \squareGadget{4}{11}{0}{$x$}
    
    \pairGadget{1}{-1.5}{-4.25}{}{\large$C_1$}{$\bar{x}_1$}{$x_3$}
    \triangleGadget{1}{2}{-4.25}{}{\large$C_2$}{$x_1$}{$x_2$}{$x_4$}
    \draw[/edgeType1] (T111) -- (T121) -- (T122) -- (T111);
    \pairGadget{2}{5.5}{-4.25}{}{\large$C_3$}{$x_1$}{$\bar{x}_4$}
    \pairGadget{3}{9}{-4.25}{}{\large$C_4$}{$\bar{x}_2$}{$\bar{x}_3$}
    \triangleGadget{2}{12.5}{-4.25}{}{\large$C_5$}{$x_2$}{$x_3$}{$x_4$}
    \draw[/edgeType1] (T211) -- (T221) -- (T222) -- (T211);

    \draw[/edgeType1] (n) .. controls (1, 1) and (2, 1) .. (S122);
    \draw[/edgeType1] (n) .. controls (6, .8) and (6.5, .8) .. (S222);
    \draw[/edgeType1] (n) .. controls (8, .8) and (9.5, .8) .. (S322);
    \draw[/edgeType1] (n) .. controls (12, 1) and (13, 1) .. (S422);
    \begin{pgfinterruptboundingbox}
    \draw[/edgeType1] (S111) .. controls (-4.5, -2) and (1,-4.5) .. (T111);   
    \end{pgfinterruptboundingbox}
    \draw[/edgeType1] (S112) .. controls (2.7, -.5) and (2,-4.5) .. (P211);
    \draw[/edgeType1] (S121) .. controls (-2, -1.5) and (-3,-5) .. (P111);
    \draw[/edgeType1] (S211.west) .. controls (2, -1.5) and (0, -5) .. (T121);
    \draw[/edgeType1] (S212) .. controls (5.5, -4) and (8, -2.5) .. (T211);
    \draw[/edgeType1] (S221.south) .. controls (2.7, -4.5) and (8, -2.5) .. (P311);
    \draw[/edgeType1] (S311) .. controls (5, -4) and (4, -1) .. (P112);
    \draw[/edgeType1] (S312.east) .. controls (10, -1) and (10, -5) .. (T221);
    \draw[/edgeType1] (S321) .. controls (6.5, -3.5) and (10.5, -2.7) .. (P312);
    \draw[/edgeType1] (S411) .. controls (8, -4.5) and (4, -3) .. (T122);
    \draw[/edgeType1] (S412) .. controls (14, -1) and (14, -5) .. (T222);
    \draw[/edgeType1] (S421) .. controls (9, -4) and (8, -4) .. (P212);

    \foreach \i in{1,3,4}
         \draw[/edgeType4] (S\i11) -- (S\i12);

    \draw[/edgeType4] (S221) -- (S222);
    \draw[/edgeType4] (P101) -- (P112);
    \draw[/edgeType4] (P201) -- (P211);
    \draw[/edgeType4] (P301) -- (P311);
    \draw[/edgeType4] (T111) -- (T122);
    \draw[/edgeType4] (T221) -- (T222);
    
    \draw[/edgeType4] (S211.west) .. controls (2, -1.5) and (0, -5) .. (T121);
    \draw[/edgeType4] (S212) .. controls (5.5, -4) and (8, -2.5) .. (T211);
    \draw[/edgeType4] (n) .. controls (1, 1) and (2, 1) .. (S122);
    \draw[/edgeType4] (n) .. controls (8, .8) and (9.5, .8) .. (S322);
    \draw[/edgeType4] (n) .. controls (12, 1) and (13, 1) .. (S422);
    \draw[/edgeType4] (S121) .. controls (-2, -1.5) and (-3,-5) .. (P111);
    \draw[/edgeType4] (S321) .. controls (6.5, -3.5) and (10.5, -2.7) .. (P312);
    \draw[/edgeType4] (S421) .. controls (9, -4) and (8, -4) .. (P212);    
\end{tikzpicture}
    \caption{
    Illustration of the possibility graph $\calg$ built from an instance $(X, \phi, k)$ of \kXthreeSAT{}{} with
    $X = \set{x_1, x_2, x_3, x_4}$, 
    $\phi = (\bar{x}_1 \oper x_3)(x_1 \oper x_2 \oper x_4)(x_1 \oper \bar{x}_4)(\bar{x}_2 \oper \bar{x}_3)(x_2 \oper x_3 \oper x_4)$
    and $k = 1$.
    Gray vertices represent artificial vertices created for clauses with only two literals.
    The highlighted edges show an example of a feasible realization for such an instance.
    }
    \label{fig:proof_w=4}
    \vspace{-.5cm}
\end{figure}

\begin{theorem}
    The \GRC{} problem cannot be solved in polynomial time unless $\classP = \classNP$ even when $w(\call) = 4$ and all degrees in the degree sequence $\texttt{d}$ are 1.
\end{theorem}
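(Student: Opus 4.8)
The plan is to reduce from \kXthreeSAT{}, which by \cref{thm:k-true-xsat} cannot be solved in polynomial time unless $\classP = \classNP$. Given an instance $(X,\phi,k)$ with $N=|X|$ variables (each occurring twice positively and once negatively) and $m$ clauses of size two or three, I would construct a \GRC{} instance $(\texttt{d},\call)$ in which every requested degree equals $1$, so that any realization is forced to be a \emph{perfect matching} of the possibility graph $\calg$. The mechanism driving the whole reduction is that a size-four set $S$ satisfies $d(S)=4$, hence a cut $(S,\ell)$ forces exactly $(4-\ell)/2$ matching edges inside $S$; choosing $\ell=2$ pins down \emph{exactly one} internal edge. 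The point is that this single internal edge can be either of two disjoint pairs inside $S$ — precisely the ambiguity that size-three cuts cannot express, as observed at the start of this section — and I will use this binary choice to encode a truth value. All pairs of vertices other than the intended gadget edges are forbidden through size-two cuts, so $\calg$ is exactly the union of the gadgets described below.

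\emph{Variable gadget.} For each variable $x$ I create four degree-one vertices $a_1,a_2,b,c$, where $a_1,a_2$ stand for the two positive occurrences, $b$ for the negative occurrence, and $c$ is a selector port. I keep in $\calg$ only the internal edges $a_1a_2$ and $bc$ (forbidding the other four internal pairs) and impose the size-four cut $(\{a_1,a_2,b,c\},2)$. This cut forces exactly one internal edge: either $bc$, in which case $a_1,a_2$ must leave the gadget (read as $x$ \emph{true}), or $a_1a_2$, in which case $b,c$ leave (read as $x$ \emph{false}); the forbidden internal pairs rule out any mixed configuration, and $\ell=2$ rules out $0$ or $2$ internal edges. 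Each port is then wired by a single cross edge to the clause side: $a_i$ to the clause-vertex of its positive occurrence, $b$ to the clause-vertex of the negative occurrence, and $c$ to the selector vertices. A cross edge is in the matching iff the corresponding literal evaluates to true, and the cut guarantees that this reading is globally consistent across the three occurrences of $x$.

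\emph{Clause gadgets and the count $k$.} For a clause with three literals I place its three clause-vertices $c_1,c_2,c_3$ as a triangle in $\calg$ and add the cut $(\{c_1,c_2,c_3\},1)$; since $d(\{c_1,c_2,c_3\})=3$, this forces exactly one internal edge, hence exactly one clause-vertex matched outward, i.e.\ exactly one true literal, while ruling out the otherwise-feasible ``all three outward'' matching. For a clause with two literals I add an auxiliary degree-one vertex $p$ joined only to $c_1$ and $c_2$; then $p$ must absorb exactly one of them, forcing the other outward and giving exactly one true literal without any further cut. Finally, to enforce that exactly $k$ variables are true I create $N-k$ selector vertices of degree one, each joined to every port $c$. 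True variables have $c$ matched internally to $b$, so only the ports of false variables are free; since all selectors must be matched and can reach only these ports, the number of false variables is forced to equal $N-k$. This is exactly why I fix $k$ in advance (so I know how many selectors to build) and therefore reduce from \kXthreeSAT{} rather than from \rXthreeSAT{}. Note that $w(\call)=4$ is attained only by the variable-gadget cuts, with all clause cuts of size at most three.

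I would then prove the biconditional in the two standard directions: from a satisfying $1$-in-clause assignment with $k$ true variables I build the matching gadget by gadget (internal edge $bc$ or $a_1a_2$ per variable, the true literal's cross edge per clause, the non-selected clause-vertices matched internally or to $p$, and a bijection between selectors and the $N-k$ free ports), and check directly that every cut value is met; conversely, a realization restricts to the two allowed configurations in each variable gadget, which I read off as an assignment, and the clause cuts and selector count certify that the assignment is $1$-in-clause with exactly $k$ true variables. The main obstacle I anticipate is the reverse direction: I must argue that no matching can ``cheat'' the intended semantics — in particular that forbidding the all-true configuration in a three-literal clause genuinely requires its size-three cut, that the size-four variable cut admits no loophole combining internal edges across the two pairs, and that the cross edges force clause-side and variable-side readings to agree, so that the per-clause number of outgoing clause-vertices always equals the number of true literals. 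Verifying this consistency simultaneously for all gadgets, together with the global handshake enforced by the selectors, is the delicate part of the argument.
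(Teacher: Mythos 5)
Your proposal is correct and is essentially the paper's own reduction: the same \kXthreeSAT{} source problem, the same four-vertex variable gadget with two allowed disjoint internal edges and a cut $(S,2)$, the same triangle-with-cut-$1$ clause gadget, and your $N-k$ selector vertices are exactly the paper's degree-$(n-k)$ vertex $s$ after the paper's final step of splitting it into $n-k$ degree-one copies. The only (immaterial) deviation is your handling of two-literal clauses via a pendant auxiliary vertex with no cut, where the paper instead places the artificial vertex inside the triple and keeps the cut $(Y_j,1)$.
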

\begin{proof}


    We present a reduction from the $k$-True \rXthreeSAT{} to the \GRC{} problem with the desired properties.
    To this end, let $(X, \phi, k)$ be an instance of the $k$-True \rXthreeSAT{}.
    We now describe the building of the instance $I = (\texttt{d}, \call)$ of \GRC{}.
    Refer to \cref{fig:proof_w=4} for an illustrative example.
    We start with $\texttt{d}$ and $\call$ empty, and we let $V$ be the corresponding set of vertices and $\calg$ be its possibility graph.
    First, for each variable $x_i$ in $X$, we build a \textit{variable gadget} as follows.
    Let $X_i$ be a set of four vertices, namely $X_i = \set{ x^i_{T_1}, x^i_{T_2}, x^i_{F_1}, x^i_{F_2}}$.
    We add $X_i$ to $V$ and set $d_u = 1$ for each $u \in X_i$, we add the edges $x^i_{T_1} x^i_{T_2}$ and $x^i_{F_1} x^i_{F_2}$ to $\calg$, and we add the cut $( X_i, 2)$ to $\call$.
    Moreover, for each clause $C_j$ of $\phi$, we define its \textit{clause gadget}.
    We create a new vertex for each literal that occurs in $C_j$. If $C_j$ has only two literals, we create another artificial one. Let $Y_j$ be this set of three vertices.
    We add $Y_j$ to $V$ and set $d_v = 1$ for each $v \in Y_j$, we add all the edges between vertices of $Y_j$ to $\calg$, and we add the cut $(Y_j, 1)$ to $\call$.

    To conclude the definition of the vertices $V$,
    we create a vertex $s$ and set $d_s = n - k$.
    The set $V$ of our instance is thus composed of $s$ and the vertices of each variable and clause gadget.
    To finish $\calg$'s construction, we join the vertex and clause gadgets as follows.
    For each variable $x_i$, let $C_{i_1}$ and $C_{i_2}$ be the two clauses where $x_i$ appears as a positive literal, and $C_{i_3}$ the clause in which it appears as a negative literal.
    We connect $x^i_{T_1}$, $x^i_{T_2}$ and $x^i_{F_1}$ to the vertex that corresponds to its literal in $C_{i_1}$, $C_{i_2}$ and $C_{i_3}$, respectively, while $x^i_{F_2}$ is connected to $s$.
    The final cut list $\call$ is given by the aforementioned cuts in the vertex and clause gadgets, plus the ones defining $\calg$.
    Now we show that there is a realization for such $(\texttt{d}, \call)$ if and only if $(X, \phi)$ is satisfiable using exactly $k$ variables as true.

    Let $\hat{x}$ be a feasible solution to $(X, \phi)$ using exactly $k$ variables as true.
    Let $G$ be a spanning subgraph of $\calg$, initially with no edges.
    For each $\hat{x}_i$ from $\hat{x}$, if $\hat{x}_i = T$, we add to $G$ the edge $x^i_{F_1}x^i_{F_2}$ along with the edges from $x^i_{T_1}$ and~$x^i_{T_2}$ to their corresponding positive literals in the clause gadgets.
    Similarly, if $\hat{x}_i = F$, then we add to $G$ the edge $x^i_{T_1}x^i_{T_2}$ along with the edges from $x^i_{F_1}$ to its corresponding negative literal in the clause gadget and from $x^i_{F_2}$ to $s$.
    At last, for each clause $C_j$ in $\phi$, let $\hat{x}_{j_1}$ and $\hat{x}_{j_2}$ be the corresponding vertices of the two literals in $C_j$ that were evaluated as false in $\hat{x}$ in case $C_j$ has three literals, or the literal evaluated to false and the artificial vertex added in the clause gadget of $C_j$, otherwise. We~then add the edge $\hat{x}_{j_1}\hat{x}_{j_2}$ to $G$ for each clause $C_j$.
    In both cases, in $G$ we have that $\partial(X_i) = 2$. 
    Since $\hat{x}_i$ is a feasible solution, in each clause $C_j$, exactly one variable is evaluated to true, which implies that $\partial(Y_j) = 1$ in $G$.
    Furthermore, the degree of all vertices except for $s$ is $1$, and since there are exactly $k$ variables in $\hat{x}$ assigned to true, there are $n-k$ edges in $G$ from vertices $x^i_{F_2}$ to $s$, thus respecting $d_s$. Therefore, $G$ is a realization of~$(\texttt{d}, \call)$.

    Conversely, let $G$ be a realization of $(\texttt{d}, \call)$.
    Since $(X_i,2)\in \call$ for each variable $x_i$, $d_u = 1$ for each $u \in X_i$, and $E(\calg[X_i])=\{ x^i_{T_1} x^i_{T_2}, x^i_{F_1} x^i_{F_2}\}$. It holds that either $x^i_{T_1}, x^i_{T_2}$ or $x^i_{F_1},x^i_{F_2}$ have neighbors outside $X_i$ in $G$. Therefore, we define an assignment $\hat{x}$ to $(X,\phi)$ as follows: $x_i=T$ if $x^i_{T_1}, x^i_{T_2}$ have neighbors outside $X_i$ in $G$, otherwise $x_i=F$. As $(Y_j, 1)\in \call$ for each clause $C_j$, it follows that each clause of $\phi$ has exactly one literal assigned to true in $\hat{x}$. Given that $d(s)=n-k$, the vertex $s$ has $n-k$ neighbors in $G$. By construction, each neighbor of $s$ in $G$ is a $x^i_{F_2}$ vertex for some $i$. Thus, $\hat{x}$ has exactly $n-k$ negative literals evaluating true, and therefore $\hat{x}$ is a feasible solution to $(X, \phi)$ in which exactly $k$ variables are assigned to true.
    
    To have all degrees in the degree sequence $\texttt{d}$ equal 1 is enough to modify the construction, replacing $s$ by $n-k$ copies each with desired degree equals one in $\texttt{d}$.  
    Thus, from this reduction, we conclude that if the \GRC{} problem is solvable in polynomial time, then we can also solve the $k$-True \rXthreeSAT{} problem in polynomial time, which would imply that $\classP{} = \classNP{}$ due to \cref{thm:k-true-xsat}. \qed
\end{proof}

\subsection{Restricted Possibility Graph}

We now move our attention to particular instances $(\texttt{d}, \call)$ of \GRC{} in which $w(\call)$ is not bounded, but the possibility graph $\calg$ belongs to a restricted graph class.
If $\calg$ is a tree, we can solve the \GRC{} in polynomial time.
For instance, we can construct a candidate realizing graph $G$ by processing $\calg$'s leaves iteratively, ensuring at each step that the degree constraints are met. If a violation occurs or the final vertex has a nonzero degree, we return NO; otherwise, we can verify whether $G$ realizes $\call$ in polynomial time.

\begin{proposition}
    \label{prep:tree_graph}
    Given an instance $(\texttt{d}, \call)$ of \GRC{} with a tree possibility graph $\calg$, we can decide if there is a solution in polynomial time.
\end{proposition}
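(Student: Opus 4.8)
The plan is to exploit the rigidity of trees: every realization $G$ must be a spanning subgraph of the possibility graph $\calg$, and when $\calg$ is a tree the degree sequence $\texttt{d}$ alone pins down $G$ completely. So the first thing I would establish is that there is \emph{at most one} subgraph of a tree realizing a prescribed degree sequence. Indeed, suppose $G_1, G_2 \subseteq \calg$ both satisfy $d_{G_1}(v) = d_{G_2}(v) = d_v$ for every $v$, and consider the symmetric difference $H = G_1 \triangle G_2$. For each vertex, $d_H(v) = 2\bigl(d_v - d_{G_1 \cap G_2}(v)\bigr)$ is even, so $H$ is a subgraph of the tree $\calg$ in which every degree is even. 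But any forest with at least one edge has a leaf, i.e.\ a vertex of degree one; hence $H$ has no edges and $G_1 = G_2$. This uniqueness is the crux of the argument, because it collapses the search over all conceivable realizations into the inspection of a single candidate.

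Given uniqueness, the algorithm has two phases: (i)~construct the unique degree-realizing subgraph $G$ of $\calg$, if it exists, and (ii)~verify that this $G$ satisfies every cut constraint in $\call$. For phase~(i) I would peel leaves of $\calg$ one at a time. A leaf $\ell$ with unique neighbour $p$ has $d_{\calg}(\ell) \le 1$, so in any subgraph its degree is $0$ or $1$, and the target value $d_\ell$ forces whether the edge $\ell p$ belongs to $G$: if $d_\ell = 1$ the edge is included and we decrement the residual target of $p$ by one; if $d_\ell = 0$ the edge is excluded; and if $d_\ell \ge 2$, or some residual target becomes negative, or the last remaining vertex has nonzero residual target, we report NO. Since every step is forced, this procedure simultaneously constructs $G$ and re-proves its uniqueness, and it runs in time linear in the size of $\calg$.

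For phase~(ii), once the candidate $G$ is fixed, checking a pair $(S_j, \ell_j) \in \call$ reduces to computing the size of $\partial(S_j)$ in $G$ and comparing it with $\ell_j$, which takes polynomial time in total. The forbidden-edge (size-two) cuts are automatically satisfied because $G \subseteq \calg = K_n - F$, while fixed-edge cuts and all larger cuts are simply validated against the single candidate; we answer YES precisely when every constraint holds. Correctness is immediate from uniqueness: any realization of $(\texttt{d},\call)$ must in particular realize $\texttt{d}$ as a subgraph of the tree $\calg$, hence coincide with $G$, so the instance is realizable if and only if $G$ itself realizes $\call$.

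The only genuine subtlety — and the step I would be most careful about — is the uniqueness claim, since it is what reduces an otherwise exponential search to a verification of one graph; once it is in hand, everything else (leaf peeling, cut-size computation) is routine bookkeeping. I would also double-check the boundary cases of the peeling process, namely negative residual targets and the final isolated vertex, to ensure the procedure correctly rejects non-realizable degree sequences before reaching the cut-verification phase.
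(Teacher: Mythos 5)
Your proposal is correct and follows essentially the same route as the paper: peel leaves of the tree, with each inclusion/exclusion decision forced by the target degrees, reject on any violation, and finally verify the constructed candidate against every cut in $\call$. Your explicit uniqueness lemma (via the symmetric difference $G_1 \triangle G_2$ having all even degrees inside a forest) is a nice way of making rigorous what the paper leaves implicit—namely that the forced peeling steps admit at most one candidate—but it does not change the underlying argument.
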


\mycomment{
\begin{proof}
We begin with an empty graph $G = (V, \emptyset)$. If $\calg$ has at least two vertices, we select a leaf $v_i$. Let $v_j$ be the father of $v_i$. If the degree $d_i$ of $v_i$ is greater than one, no valid solution exists as there are not enough allowed edges, and we immediately return no. Otherwise, if $d_i = 1$, we decrease $d_j$ by one and add an edge between $v_i$ and $v_j$ in $G$. Then, we remove $v_i$ from $\calg$ along with its degree $d_i$ from the sequence \texttt{d}. These removals also happen if $d_i=0$.

We repeat this process, treating one leaf at a time until only a single vertex $v_i$ remains in $\calg$. At this final step, we return no if the remaining degree $d_i$ is not zero.
This approach guarantees that if the algorithm returns no at any point, the instance indeed has no solution. Otherwise, once the graph $G$ is constructed, we verify if $G$ realizes $\call$, outputting yes if it does and no otherwise, which can be done in polynomial time. \qed
\end{proof}
}

As it turns out, if we relax the restrictions on $\calg$ and allow a bipartite graph, we get a \classNPC{} problem.
To show this, we will make use of the 3-Dimensional Matching problem, which is defined next.


\defproblema{3-Dimensional Matching -- \TDM{}}
{
3 disjoint sets $X$, $Y$, and $Z$ with $|X| = |Y| = |Z| = n$, and a set of triples $T \subseteq X \times Y \times Z$.
}
{
Is there a subset $M \subseteq T$ such that $\size{M}=n$ and no two triples of $M$ intersect?
}

This problem remains \classNPC{} if no element occurs in more than three triples \cite{Ga79}. We refer to this particular case as \TDMT{}.

\begin{theorem}
    The \GRC{} problem is \classNPC{} when the possibility graph $\calg$ is subcubic and bipartite, even when $w(\call) = 6$ and \texttt{d} is a sequence of ones.
\end{theorem}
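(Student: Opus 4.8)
The plan is to prove membership in \classNP{} and then reduce from \TDMT{}, which is stated to be \classNPC{}. Membership is clear, since a realizing graph certifies a YES-instance and its degrees and all cuts in $\call$ are checkable in polynomial time. For the reduction, I take a \TDMT{} instance with element sets $X,Y,Z$ and triple set $T$, and build a \GRC{} instance in which every entry of \texttt{d} equals $1$. The crucial simplification is that, with all degrees equal to $1$, any realization is exactly a \emph{perfect matching} of the possibility graph $\calg$; thus the whole problem becomes: does the bipartite, subcubic graph $\calg$ admit a perfect matching meeting every cut constraint of $\call$? I will engineer $\calg$ and $\call$ so that such perfect matchings correspond bijectively to the exact covers $M \subseteq T$, i.e.\ to the $3$-dimensional matchings.

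First I would build, for each element $e$ (which by hypothesis lies in at most three triples), an \emph{element gadget}: a small bipartite subcubic structure whose ``incidence'' vertices connect $e$ to each triple containing it, designed so that in every perfect matching exactly one incident triple \emph{claims} $e$. A convenient realization uses, for an element in $p \le 3$ triples, $p-1$ internal vertices each adjacent to all $p$ incidence vertices; since these internal vertices must be matched but can only meet incidence vertices, exactly one incidence vertex is forced to match outside the gadget, encoding the single triple that covers $e$. This keeps every degree at most three and the gadget bipartite.

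Next, for each triple $t = \set{x,y,z}$ I would add a \emph{triple gadget} together with a cut of size $6$ enforcing \emph{consistency}: the three incidences of $t$ must be activated all-or-nothing, so that $t$ is either fully selected (covering $x,y,z$) or entirely unused. This is exactly where the bound $w(\call)=6$ is spent, and where I expect the main obstacle to lie. The difficulty is threefold: the gadget must be bipartite and subcubic; it must \emph{forbid partial activation} (covering one or two of the three elements), which a single exact cut cannot achieve on its own, because the all-in and all-out configurations produce different cut values; and a triple has an odd number of elements, so matching the ``unused'' configuration internally creates a parity defect. I plan to resolve all three at once by doubling each incidence into a pair of ports tied together by an even ``equalizer'' path (forcing the two ports to leave or stay together), chaining the three pairs so the three incidences are forced equal, and using the size-$6$ cut over the six incidence ports to pin the activation count. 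The even doubling repairs the parity, the equalizers rule out the mixed states, and the cut fixes the count. Verifying that this combination admits \emph{only} the two intended configurations, while staying within the subcubic, bipartite, and cut-size-$6$ budgets, is the technical heart of the proof.

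Finally I would prove the two directions. Given a $3$-dimensional matching $M$, selecting, for each $t \in M$, the ``selected'' configuration of its triple gadget and, for each element, the incidence toward its covering triple, yields a perfect matching of $\calg$ satisfying every cut; conversely, from any realization the claimed triples (one per element, forced consistent by the triple gadgets) form a subset of $T$ covering every element exactly once, hence a valid $3$-dimensional matching. I would close by checking the structural guarantees: all degrees are $1$, $\calg$ is bipartite and subcubic by construction, and the only nontrivial cuts have size $6$, so $w(\call)=6$. Together with membership in \classNP{}, this establishes that \GRC{} is \classNPC{} under the stated restrictions.
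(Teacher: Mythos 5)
Your high-level skeleton matches the paper's: membership in \classNP{} is immediate, the reduction is from \TDMT{}, and the key observation that an all-ones degree sequence makes every realization a perfect matching of the possibility graph $\calg$ is exactly the paper's starting point. But the construction you propose diverges from the paper's, and the divergence lands you in a genuine gap: the entire burden of consistency is placed on a ``triple gadget'' with a size-$6$ cut that must admit \emph{only} the all-selected and all-unselected configurations, and you never actually build or verify it. You yourself flag the core obstruction --- a single exact cut constraint $(S,\ell)$ fixes one value of $|\partial(S)|$, while the all-in and all-out states naturally leave different numbers of matching edges crossing $S$ --- and your proposed fix (doubled ports, even ``equalizer'' paths, chaining) is described only at the level of intent. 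It is not shown which six vertices form $S$, what $\ell$ is, why both intended configurations give the same cut value, why mixed states are excluded, or why the parity repair keeps the graph bipartite and subcubic. Since you explicitly label this verification ``the technical heart of the proof'' and then do not carry it out, the proof is incomplete at precisely its hardest point.

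It is instructive that the paper sidesteps this difficulty entirely by being \emph{asymmetric}: elements of $X$ and $Z$ are single degree-one vertices (no gadget at all), and only each $y_j \in Y$ gets a group $V_j$ of at most six vertices --- a pair $y^j_{u,a}, y^j_{u,b}$ per occurrence $u$ of $y_j$, with allowed edges forming paths $x_{i_u}$--$y^j_{u,a}$--$y^j_{u,b}$--$z_{k_u}$ --- together with the single cut $(V_j, 2)$. There the cut value is \emph{invariant} across all intended configurations (exactly one occurrence is activated, contributing one edge to $X$ and one to $Z$), so no equalizing machinery is needed; and triple-consistency comes free from a parity argument inside $V_j$: if the two external edges used different occurrences $u \neq v$, the leftover vertices $y^j_{u,b}$ and $y^j_{v,a}$ would have no available internal partner. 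Your symmetric element-gadget idea (with $p-1$ internal vertices forcing exactly one of $p$ incidence vertices to match outside) is plausible on its own, but without a concrete, verified triple gadget the reduction does not go through; to complete your route you would need to exhibit the gadget explicitly and prove the configuration analysis you deferred, whereas the paper's choice of where to spend the size-$6$ budget ($|V_j| \le 6$, cut value $2$) makes that analysis a three-line case check.
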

\begin{proof}
The \GRC{} is clearly in \classNP{}. We show a reduction from \TDMT{} to prove its hardness.
Consider an instance $(X, Y, Z, T)$ of \TDMT{} where $|X| = |Y| = |Z| = n$. Without loss of generality, assume that every element in $X$, $Y$, and $Z$ appears in at least one triple in $T$ (see Figure \ref{fig:3dm3red} for an illustration of the reduction).

To construct the vertex set $V$ for our \GRC{} instance, we proceed as follows: for each element $x_i \in X$, we create a corresponding vertex $x_i$ in $V$.
For each element $y_j \in Y$, we construct a group of vertices $V_j$, determined by the number of triples in $T$ containing $y_j$. If $y_j$ appears in $l$ triples, where $1 \leq l \leq 3$, we create $2l$ vertices labeled $y^j_{1,a}, \ldots, y^j_{l,a}$ and $y^j_{1,b}, \ldots, y^j_{l,b}$, and group these into two sets, $Y^j_a = \{y^j_{1,a}, \ldots, y^j_{l,a}\}$ and $Y^j_b = \{y^j_{1,b}, \ldots, y^j_{l,b}\}$. Define $Y_a = \bigcup_j Y^j_a$ and $Y_b = \bigcup_j Y^j_b$.
Lastly, for each element $z_k \in Z$, we create a vertex $z_k$ in $V$. In total, this construction yields $2(|T| + n)$ vertices, where $V = X \cup Y_a \cup Y_b \cup Z$.

We define the degree sequence \texttt{d} such that each vertex in $V$ has a degree exactly one. This degree constraint ensures that each vertex is matched with only one other vertex, guaranteeing that any feasible solution forms a matching.
%
Next, we construct the cut list $\call$. For each group $V_j$, we add the pair $(V_j, 2)$ to $\call$. This is the largest cut with a size of at most six, enforcing exactly two edges connecting vertices in $V_j$ to vertices outside of $V_j$. We will later argue that they specifically connect to a vertex of $X$ and a vertex of $Z$.

We then add cuts of size two, as per Remark \ref{thm:fixed_forbidden_edges}, to $\call$ to prohibit all edges except those allowed by the following rules. For each $y_j \in Y$, let $(x_{i_1}, y_j, z_{k_1}), \ldots,$ $(x_{i_l}, y_j, z_{k_l})$ denote the $l$ triples of $T$ in which $y_j$ appears. We only allow edges from $x_{i_u}$ to $y^j_{u,a}$, from $y^j_{u,a}$ to $y^j_{u,b}$, and from $y^j_{u,b}$ to $z_{k_u}$, for each $1 \leq u \leq l$.
The construction encodes the selection of a triple $(x_{i_v}, y_j, z_{k_v})$ by including the edges $x_{i_v} y^j_{v,a}$ and $y^j_{v,b} z_{k_v}$ in the realization, while the remaining vertices in $V_j$ forms a matching.
Observe that this instance's possibility graph $\calg$ is bipartite and subcubic. Each vertex of $Y_a\cup Y_b$ has degree two, while the vertices in $X \cup Z$ have a degree at most three, as no element occurs in more than three triples.

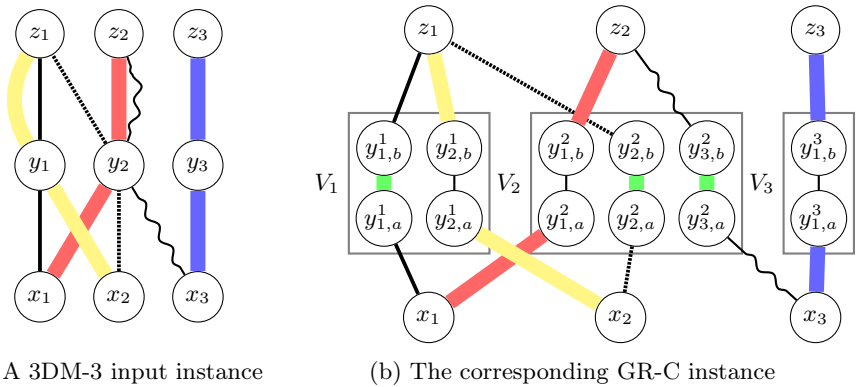
\begin{figure}[htb!]
    \centering
    \begin{subfigure}[b]{0.35\textwidth}
\centering
\raisebox{1\ht\strutbox}{
    \begin{tikzpicture}[scale=0.7, transform shape = false, rotate=90]
        \pgfkeys{/nodeType/.style={circle, draw},
        /edgeType1/.style={solid, thick, line width=.5mm},
        /edgeType2/.style={solid, thick},
        /edgeType3/.style={thick, decorate, decoration={snake, amplitude=.5mm,pre=lineto,pre length=3pt,post=lineto,post length=3pt}},
        /edgeType4/.style={line width=2mm, draw opacity=0.7}} 
    
        \def\n{3}
        \foreach \i in {1, ..., \n}{
            \node[/nodeType] (x\i) at (0, -1.5*\i) {$x_\i$}; 
            \node[/nodeType] (y\i) at (2.5, -1.5*\i) {$y_\i$}; 
            \node[/nodeType] (z\i) at (5, -1.5*\i) {$z_\i$};
        }
    
        \draw[/edgeType1] (x1) -- (y1) -- (z1);
        \draw[/edgeType2] (x1) -- (y2) -- (z2);
        \draw[/edgeType2] (x2) -- (y1) to[bend left=30] (z1);
        \draw[/edgeType1] [dotted, dash pattern=on 1pt off .5pt] (x2) -- (y2) -- (z1);
        \draw[/edgeType3] (x3) -- (y2);
        \draw[/edgeType3] (y2) to[bend right=20](z2);
        \draw[/edgeType2] (x3) -- (y3) -- (z3);
    
         \draw[/edgeType4] [red!60] (x1) -- (y2) -- (z2);
         \draw[/edgeType4] [yellow!60] (x2) -- (y1) to[bend left=30] (z1);
         \draw[/edgeType4] [blue!60] (x3) -- (y3) -- (z3);    
    \end{tikzpicture}
}
\caption{A \TDMT{} input instance}
\label{fig:sample3dm3}
\end{subfigure}
\quad
\begin{subfigure}[b]{0.58\textwidth}
\centering
\begin{tikzpicture}[scale=0.85, transform shape = false, rotate=90]
    \pgfkeys{/edgeType1/.style={solid, thick, line width=.5mm},
    /edgeType2/.style={solid, thick},
    /edgeType3/.style={thick, decorate, decoration={snake, amplitude=.5mm,pre=lineto,pre length=5pt,post=lineto,post length=5pt}},
    /edgeType4/.style={line width=2mm, green!60, draw opacity=0.7}} 

    \node[circle, draw] (x1) at (-1.0, -1.25) {$x_1$};
    \node[circle, draw] (z1) at (4-0.5, -1.25) {$z_1$};
    \setVi{1}{0}{0}{2}
    \node[circle, draw] (x2) at (-1.0, -4.25) {$x_2$};
    \node[circle, draw] (z2) at (4-0.5, -4.25) {$z_2$};
    \setVi{2}{0}{-2.85}{3}
    \node[circle, draw] (x3) at (-1.0, -7.3) {$x_3$};
    \node[circle, draw] (z3) at (4-0.5, -7.3) {$z_3$};
    \setVi{3}{0}{-6.8}{1}

    \draw[/edgeType1] (x1) -- (y111);
    \draw[/edgeType1] (y112) -- (z1);
    \draw[/edgeType2] (y111) -- (y112);
    \draw[/edgeType2] (x2) -- (y121) -- (y122) -- (z1);
    \draw[/edgeType2] (x1) -- (y211) -- (y212) -- (z2);
    \draw[/edgeType1] [dotted, dash pattern=on 1pt off .5pt] (x2) -- (y221);
    \draw[/edgeType2] (y221) -- (y222);
    \draw[/edgeType1] [dotted, dash pattern=on 1pt off .5pt] (y222) -- (z1);
    \draw[/edgeType3] (x3) -- (y231);
    \draw[/edgeType2] (y231) -- (y232);
    \draw[/edgeType3] (y232) -- (z2);
    \draw[/edgeType2] (x3) -- (y311) -- (y312) -- (z3);

    \draw[/edgeType4] [red!60] (x1) -- (y211);
    \draw[/edgeType4] [red!60] (y212) -- (z2);
    \draw[/edgeType4] (y221) -- (y222);
    \draw[/edgeType4] (y231) -- (y232);
    \draw[/edgeType4] [yellow!60] (x2) -- (y121);
    \draw[/edgeType4] [yellow!60] (y122) -- (z1);
    \draw[/edgeType4] (y111) -- (y112);
    \draw[/edgeType4] [blue!60] (x3) -- (y311);
    \draw[/edgeType4] [blue!60] (y312) -- (z3);
\end{tikzpicture}
\caption{The corresponding \GRC{} instance}
\label{fig:sampleGRC}
\end{subfigure}
    \caption{
    A \TDMT{} instance example where $T=\{(x_1, y_1, z_1), (x_1, y_2, z_2),\\ (x_2, y_1, z_1), (x_2, y_2, z_1), (x_3, y_2, z_2), (x_3, y_3, z_3)\}$. Distinct edge types are assigned to each triple, with a solution highlighted.
    The right image depicts the possibility graph of the reduced \GRC{} instance, with a feasible realization highlighted.
    }
    \label{fig:3dm3red}
    \vspace{-.5cm}
\end{figure}

If a feasible matching $M \subseteq T$ exists in the \TDMT{} instance, we can map it directly to the edges of a valid realization $G$ for the constructed \GRC{} instance.
For each $(x_{i_u}, y_j, z_{k_u}) \in M$, using the $u$th occurrence of $y_j$, we add the edges $x_{i_u} y^j_{u,a}$ and $y^j_{u,b} z_{k_u}$ to $G$. For each $(x_{i_v}, y_j, z_{k_v}) \in T \setminus M$, we add the edge $y^j_{v,a} y^j_{v,b}$.
Since $M$ is a solution, each vertex in $X \cup Z$ has degree one, satisfying the degree constraints.
Additionally, within each group $V_j$, exactly two vertices—$y^j_{u,a}$ and $y^j_{u,b}$ from a triple in $M$—connect to vertices in $X$ and $Z$, respectively. All other vertices within $V_j$ correspond to triples not included in $M$, forming a matching within $V_j$.
Therefore, $G$ fulfills both the degree sequence \texttt{d} by assigning degree one to every vertex and the cut list $\mathcal{L}$, meeting all the required constraints for a valid realization.

Conversely, if a graph $G$ exists that realizes both the degree sequence \texttt{d} and the cut list $\call$, we can construct a feasible matching $M \subseteq T$ for the \TDMT{} instance.
Since \texttt{d} specifies a degree of one for each vertex, the edges of $G$ form a matching.
Additionally, exactly two vertices within each group $V_j$ are matched to vertices of $X\cup Z$, meaning the remaining vertices within each $V_j$ form an internal matching.
These two externally matched vertices must correspond to the same triple in $T$; otherwise, the remaining vertices in $V_j$ could not be paired and meet the type $(V_j, 2)$ cut constraint.
Let $M$ consist of the triples in $T$ for which the associated $y^j_{u,a}$ and $y^j_{u,b}$ vertices in $V_j$ are connected to vertices in $X$ and $Z$, respectively.
Thus, by construction, vertex $x_i$ connects to $y^j_{u,a}$ and $y^j_{u,b}$ to $z_k$ if and only if the triple $(x_i, y_j, z_k)$ of $T$ belongs to $M$.
So $M$ contains exactly one triple per $V_j$, covering each element of $Y$ exactly once, thus $|M| = n$.
Since $G$ realizes $\call$, no edges exist between vertices in $X$ and $Z$. Hence, given that $G$ is a matching, each vertex in $X$ connects to exactly one vertex in $Y_a$, and each vertex in $Z$ connects to exactly one vertex in $Y_b$.
Consequently, $M$ constitutes a valid matching for the \TDMT{} instance.
\qed
\end{proof}

\section{Final Remarks}
\label{sec:final_remarks}
We introduced the \textsc{Graph Realization with Cut Constraints} problem in this work. 
This problem is interesting because it combines different graph theory concepts, including degree sequence, cut constraints, $f$-factors, and graph realization.
We provide a detailed characterization of its computational complexity based on the size of the cuts.
Our results show that the problem can be solved in polynomial time when the cuts are small enough (size at most three). However, the complexity significantly increases when the cuts are larger, and we proved that it becomes \classNPH{}.
An interesting direction for future work is identifying other graph classes where the possibility graph $\calg$ of a \GRC{} instance ensures polynomial-time solvability. For example, the idea of \cref{prep:tree_graph} might extend to cactus or, more generally, to graphs with bounded degeneracy or treewidth. The case of a planar possibility graph also deserves further investigation. 
We also ask about the complexity of {1-in-3 SAT}$_{(2,2)}$, the variant of {1-in-3 SAT} where each variable occurs exactly four times, twice positive and twice negative.

\begin{credits}
\subsubsection{\ackname}
This work was started during the 6th edition of WoPOCA, which took place in Campinas, São Paulo, Brazil. We thank the organizers and the agencies CNPq (process number 404315/2023-2) and FAEPEX (process number 2422/23).
We thank Esther Arkin, Soumya Banerjee, Rezaul Alam Chowdhury, Mayank Goswami, Dominik Kempa, Joseph Mitchell, Valentin Polishchuk, and Steven Skiena for some discussions prior the event, which helped motivate this work.
This research has received funding from Rio de Janeiro Research Support Foundation (FAPERJ) under grant agreement E-26/201.344/2021, the National Council for Scientific and Technological Development (CNPq) under grant agreements 309832/2020-9 and \mbox{163645/2021-3}, the São Paulo Research Foundation (FAPESP) under grant agreement 2022/13435-4, and the Brazilian Federal Agency for Support and Evaluation (CAPES) with process numbers 88887.646008/2021-00 and 88887.647870/2021-00. 

\subsubsection{\discintname}
The authors have no competing interests to declare that are
relevant to the content of this article.
\end{credits}

%
%
%
\bibliographystyle{splncs04}
\bibliography{references}

\end{document}